\documentclass[runningheads]{llncs}

\usepackage{amssymb}
\usepackage{mathtools}
\usepackage{color}
\usepackage{wrapfig}
\usepackage{comment}
\usepackage{thmtools} 
\usepackage{thm-restate}

\usepackage[ruled, lined, linesnumbered]{algorithm2e}
\usepackage{multirow}
\usepackage{listings}
\usepackage{multirow}
\usepackage{multicol}
\usepackage{graphicx}
\usepackage{paralist}
\usepackage[title]{appendix}
\usepackage{tikz}
\usepackage{todonotes} 
\usepackage{makecell}
\usetikzlibrary{positioning, shapes,patterns,shadows.blur, arrows,decorations.text,arrows,automata,shadows,patterns,chains,bending,arrows.meta}

\definecolor{tfg}{HTML}{EEEEEE}
\definecolor{tbg}{HTML}{333333}
\tikzset{%
	initial text={},
	unit/.style={draw=tfg,line width=0.15pt,rectangle, rounded corners=3pt,line width=0.15pt, color=tbg,fill=tfg!0,font=\tiny,text width=1.5cm, align=center, anchor=center,rounded corners=1pt},
}
\usepackage{pgfplots}
\pgfplotsset{compat=1.17}
\usepgfplotslibrary{fillbetween}
\usepackage{array}

\SetKwInOut{Input}{Input}
\SetKwInOut{Output}{Output}

\usepackage{xspace}
\newcommand{\toolname}{Z3str3RE\xspace}
\newcommand{\benchmarkgen}{StringFuzz-regex-generated\xspace}
\newcommand{\benchmarkxform}{StringFuzz-regex-transformed\xspace}
\newcommand{\benchmarkcollected}{RegEx-Collected\xspace}
\newcommand{\benchmarkregexlib}{AutomatArk\xspace}

\newcommand{\vsep}{\hspace{3mm} | \hspace{3mm}}
\newcommand{\totalinstances}{57256}

\definecolor{colourCVC4}{HTML}{25333D}
\definecolor{colourZ3Seq}{HTML}{0065AB}
\definecolor{colourOSTRICH}{HTML}{8939AD}
\definecolor{colourZ3-Trau}{HTML}{007E7A}
\definecolor{colourZ3str3}{HTML}{CD3517}
\definecolor{colourZ3str3RE}{HTML}{318700}

\usepackage{graphicx}


\begin{document}
\title{An SMT Solver for Regular Expressions and Linear Arithmetic over String Length}
\titlerunning{An SMT Solver for Regexes and String Length}
%
%

\author{Murphy~Berzish\inst{1} \and
	  Mitja~Kulczynski\inst{2} \and
	  Federico~Mora\inst{3} \and
	  Florin~Manea\inst{4} \and
	  Joel~D.~Day\inst{5} \and
	  Dirk~Nowotka\inst{2}\and   
	  Vijay~Ganesh\inst{1}
	}
\authorrunning{M. Berzish et al.}
\institute{University of Waterloo, Waterloo, Canada\and
	       Kiel University, Kiel, Germany \and
		   University of California, Berkeley, USA \and
 		   University of G\"ottingen and Campus-Institute Data Science, G\"ottingen, Germany \and
  		   Loughborough University, Loughborough, UK
}

\input{plots.tex}

\maketitle

\begin{abstract}
  We present a novel length-aware solving algorithm for the quantifier-free first-order theory over regex membership predicate and linear arithmetic over string length. We implement and evaluate this algorithm and related heuristics in the Z3 theorem prover. A crucial insight that underpins our algorithm is that  real-world instances contain a wealth of information about upper and lower bounds on lengths of strings under constraints, and such information can be used very effectively to simplify operations on automata representing regular expressions. Additionally, we present a number of novel general heuristics, such as the prefix/suffix method, that can be used in conjunction with a variety of regex solving algorithms, making them more efficient. We showcase the power of our algorithm and heuristics via an extensive empirical evaluation over a large and diverse benchmark of \totalinstances{} regex-heavy instances, almost 75\% of which are derived from industrial applications or contributed by other solver developers. Our solver outperforms five other state-of-the-art string solvers, namely, CVC4, OSTRICH, Z3seq, Z3str3, and Z3-Trau, over this benchmark, in particular achieving a 2.4x speedup over CVC4, 4.4x speedup over Z3seq, 6.4x speedup over Z3-Trau, 9.1x speedup over Z3str3, and 13x speedup over OSTRICH. 
  
  \keywords{String solvers \and SMT solvers \and Regular expressions}
\end{abstract}



\section{Introduction}
\label{sec:intro}
Satisfiability Modulo Theories (SMT) solvers that support theories over regular expression (regex) membership predicate and linear arithmetic over length of strings, such as CVC4~\cite{CVC4-FROCOS15}, Z3str3~\cite{Z3str3}, Norn~\cite{norn}, S3P~\cite{S3P}, and HAMPI~\cite{hampi}, have enabled many important applications in the context of analysis of string-intensive programs. Examples include symbolic execution and path analysis~\cite{tacas09,willem}, as well as security analyzers that make use of string and regex constraints for input sanitization and validation~\cite{awszelkova,kaluza,jalangi}. 
Regular expression libraries in programming languages provide very intuitive and popular ways for developers to express input validation, sanitization, or pattern matching constraints.
Common to all these applications is the requirement for a rich quantifier-free (QF) first-order theory over strings, regexes, and integer arithmetic. Unfortunately, the QF first-order theory of strings containing regex constraints, linear integer arithmetic on string length, string-number conversion, and string concatenation (but no string equations\footnote{We use the terms ``word'' and ``string'' interchangeably in this paper.}) is undecidable (see Appendix). It can also be shown that many non-trivial fragments of this theory are hard to decide (e.g., they have exponential-space lower bounds or are PSPACE-complete). Therefore, the task of creating efficient solvers to handle practical string constraints that belong to fragments of this theory remains a very difficult challenge.

Many modern solvers typically handle regex constraints via an automata-based approach~\cite{abc-cav15}. Automata-based methods are powerful and intuitive, but solvers must handle two key practical challenges in this setting. The first challenge is that many automata operations, such as intersection, are computationally expensive, yet handling these operations is required in order to solve constraints that are relevant to real-world applications. The second challenge relates to the integration of length information with regex constraints. Length constraints derived from automata may imply a disjunction of linear constraints, which is often more challenging for solvers to handle than a conjunction. 

As we demonstrate in this paper, the challenges of using automata-based methods can be addressed via prudent use of \textit{lazy extraction of implied length constraints} and \textit{lazy regex heuristics} in order to avoid performing expensive automata operations when possible. Inspired by this observation, we introduce a length-aware automata-based algorithm, \toolname{} (and its implementation as part of the Z3 theorem prover~\cite{z3}), for solving regex constraints and linear integer arithmetic over length of string terms. \toolname{} takes advantage of the compactness of automata in representing regular expressions, while at the same time mitigating the effects of expensive automata operations such as intersection by leveraging length information and lazy heuristics. 

\smallskip
\noindent{\bf Contributions:} We make the following contributions in this paper.
\smallskip

\noindent \textbf{\toolname{}: An SMT Solver for Regular Expressions and Linear Integer Arithmetic over String Length.} In Section~\ref{sec:algorithm}, we present a novel decision procedure for the QF first-order theory over regex membership predicate and linear integer arithmetic over string length. We also describe its implementation, \toolname{}, as part of the Z3 theorem prover~\cite{Z3str3,z3}. The basic idea of our algorithm is that formulas obtained from practical applications have many implicit and explicit length constraints that can be used to reason efficiently about automata representing regexes. In Section~\ref{sec:heuristics} we present four heuristics that aid in solving regular expression constraints and that can be leveraged in general settings. Specifically, we present a heuristic to derive explicit length information directly from regexes, a heuristic to perform expensive automata operations lazily, a heuristic to refine lower and upper bounds on lengths of string terms with respect to regex constraints, and a prefix/suffix over-approximation heuristic to find empty intersections without constructing automata. All heuristics are designed to guide the search and avoid expensive automata operations whenever possible. Our solver, \toolname{}, handles the above theory as well as extensions (e.g. word equations and substring function) via the existing support in Z3str3. We focus on the core algorithm as it is the centerpiece of our regex solver. We also carefully distinguish the novelty of our method from previous work. 

\noindent \textbf{Empirical Evaluation and Comparison of \toolname{}\footnote{A reproduction package is available at \url{https://figshare.com/s/5ae73a6f3c55f5c5e4c1}} against CVC4, OSTRICH, Z3seq, Z3str3, and Z3-Trau:} To validate the practical efficacy of our algorithm, we present a thorough evaluation of \toolname{} in Section~\ref{sec:results} and compare it against CVC4~\cite{CVC4-CAV14}, OSTRICH~\cite{ostrich}, Z3's sequence solver~\cite{z3}, Z3str3~\cite{cav15}, and Z3-Trau~\cite{z3-trau} on \totalinstances{} instances across four regex-heavy benchmarks with connections to industrial security applications, including instances from Amazon Web Services and AutomatArk~\cite{automatark}. \toolname{} significantly outperforms other state-of-the-art tools on the benchmarks considered, having more correctly solved instances in total, lower running time, and fewer combined timeouts/unknowns than other tools, and no soundness errors or crashes. We note that almost 75\% of the benchmarks were obtained from industrial applications or other solver developers. Over all the benchmarks, we demonstrate a 2.4x speedup over CVC4, 4.4x speedup over Z3seq, 6.4x speedup over Z3-Trau, 9.1x speedup over Z3str3, and 13x speedup over OSTRICH. 

\section{Preliminaries}
\label{sec:prelim}

This section contains some basic definitions as well as a brief overview of the theoretical results which shape the landscape in which we state our contribution.

\subsection{Basic Definitions}
We first describe the syntax and semantics of the input language supported by our solver \toolname{} (Algorithm~\ref{alg:highlevel}).

\noindent{\bf Syntax:} The core algorithm we present in Section~\ref{sec:algorithm} accepts formulas of the quantifier-free many-sorted first-order theory of regex membership predicates over strings and linear integer arithmetic over string length function. The syntax of this theory is shown in Figure~\ref{fig:syntax}.

\begin{figure}[t!]
$\begin{array}{llll}
 F &\! \Coloneqq & Atom \vsep F \land F \vsep F \lor F \vsep \lnot F \\
 Atom &\! \Coloneqq & t_{str} \in RE \vsep A_{int} & \\ 
 A_{int} &\! \Coloneqq & t_{int} = t_{int} \vsep t_{int} < t_{int} \\
 RE &\! \Coloneqq & ``w" \!\! \vsep \!\! RE \cdot RE \!\!\vsep\!\! RE \cup RE \!\!\vsep\!\! RE^{*} \!\!\vsep\!\! \overline{RE}, \hspace{1mm}  \text{with} \hspace{1mm} w \in \mathrm{Con_{str}} & \\
 t_{int} &\! \Coloneqq & m \!\!\!\vsep\!\!\! v \!\!\!\vsep\!\!\! len(t_{str}) \!\!\!\vsep\!\!\! t_{int} + t_{int} \!\!\!\vsep\!\!\! m \cdot t_{int},   \text{with}  \hspace{1mm} m \in \mathrm{Con_{int}}, v \in Var_{int} \\
 t_{str} &\! \Coloneqq & s 
 , \hspace{1mm}   \text{with} \hspace{1mm} s \in \mathrm{Var_{str}} \cup \mathrm{Con_{str}}
  \end{array}$
\caption{Syntax of the input language accepted by Algorithm~\ref{alg:highlevel}. \toolname{} accepts an extension of this syntax supporting word equations and other string terms.}
\label{fig:syntax}
\end{figure}

We denote the set of all
string variables and all integer variables as $\mathrm{Var_{str}}$ and
$\mathrm{Var_{int}}$ respectively, and the set of all string constants and all
integer constants as $\mathrm{Con_{str}}$ and $\mathrm{Con_{int}}$ respectively.
String constants are any sequence of zero or more characters over a finite alphabet (e.g., ASCII).

Atomic formulas are regular expression
membership constraints and linear integer (in)equalities.
Regex terms are denoted recursively over
regex concatenation, union, Kleene star, and complement, and
for a string constant $w$, the regex term ``$w$'' represents the regular language
containing $w$ only.
All regex terms must be grounded (i.e. cannot
contain variables).  Linear integer arithmetic terms include integer
constants and variables, addition, and string length.
Multiplication by a constant is expanded to repeated addition.
String terms are either string variables or string constants.
The length of a string $S$ is denoted by $len(S)$, the number of characters in $S$.  The empty string has length 0.

Our implementation \toolname{} supports the theory in Figure~\ref{fig:syntax} extended with
more expressive functions and predicates, including word equations (equality between arbitrary string terms) 
and functions such as \texttt{indexof} and \texttt{substr} that are needed
for program analysis. \toolname{} handles these terms via existing support in Z3str3.
We focus on the above input language in the presentation of our algorithm in this paper
and theoretical content.

\smallskip

\noindent{\bf Semantics:} We refer the reader to~\cite{cav15} for a detailed description of the
semantics of standard terms in this theory. We focus here on the semantics
of terms which are less commonly known.
The regex membership predicate $S \in R$, where $S$ is a string term and $R$ is a regex term, is defined by structural recursion as follows:

$\begin{array}{lll}
  S \in ``w" & \mbox{iff} & \mbox{$S = w$ (where $w$ is a string constant) }\\
  S \in R_1 \cdot R_2 & \mbox{iff} & \mbox{there exist strings $S_1, S_2$ with $S = S_1 \cdot S_2$, $S_1 \in R_1$, $S_2 \in R_2$}\\
  S \in R_1 \cup R_2 & \mbox{iff} & \mbox{either $S \in R_1$ or $S \in R_2$}\\
  S \in R^{*} & \mbox{iff} & \mbox{either $S = \epsilon$ or there exists a positive integer $n$ such that}\\
  & & \mbox{$S = S_1 \cdot S_2 \cdot \hdots \cdot S_n$ and $S_i \in R$ for each $i = 1 \hdots n$} \\
  S \in \overline{R} & \mbox{iff} &  \mbox{$S \not\in R$ (that is, $S \in R$ is false)}
  \end{array}
$



\subsection{Theoretical Landscape} \label{sec:theory}

To put our contributions in context, we briefly discuss a series of (un)decidability and complexity results developed around the fragments and extensions of the theory supported by \toolname{}. 

In particular, we consider extensions which may have a string-number conversion predicate $numstr$~\footnote{We introduce $numstr$, which is not part of the SMT-LIB standard, in order to simplify presentation of the theoretical results. The predicate is no more expressive than the standard operators \texttt{str.to\_int}/\texttt{str.from\_int}, except that those terms handle decimal inputs. The results easily extend to other (finite) alphabets including decimal/hexadecimal digits with appropriate case analysis.} 
and/or string concatenation. Both extensions are important to real-world program analysis. The predicate
$numstr$ has the syntax $numstr(t_{int}, t_{str})$ and the following semantics: $numstr(n, s)$ is true for a given integer $n$ and string $s$ iff $s$ is a valid binary representation of the number $n$ (possibly with leading zeros) and $n$ is a non-negative integer. That is, $s$ only contains the characters 0 and 1, and $\sum_{i = 0}^{len(s) - 1} s'[i] 2^{len(s) - i - 1} = n$, where $s'[i]$ is 0 if the $i$th character in $s$ is `0' and 1 if that character is `1'. String concatenation has the syntax $t_{str} \Coloneqq  t_{str} \cdot t_{str}$ and the usual semantics defined by SMT-LIB~\cite{smtlib}.

In the following, $T_{LRE,n,c}$ is the quantifier-free many-sorted first-order theory of linear integer arithmetic over string length function ($L$), regex ($RE$) membership predicates, string-number conversion ($n$), and string concatenation~($c$)~\footnote{Note that the fragments considered here do not include word equations.}. The following quantifier-free fragments of $T_{LRE,n,c}$ are of interest: 
$T_{LRE,c}$, $T_{LRE}$, $T_{RE,n,c}$, $T_{RE,n}$, and $T_{RE}$. The fragment $T_{LRE,c}$ (respectively, $T_{LRE}$) has all functions and predicates of $T_{LRE,n,c}$ except the string-number conversion predicate (and, respectively, except the string concatenation function). 

The theory $T_{RE,n,c}$ (respectively, $T_{RE,n}$ and $T_{RE}$) has all functions and predicates of $T_{LRE,n,c}$ except the length function (and, respectively, the string concatenation function, and, in the case of $T_{RE}$, the string-number conversion predicate). Note that while all these theories allow equalities between terms of sort $Int$, they do not allow equalities between terms of sort $Str$, and therefore can not express general word equations.

The theoretical landscape is laid out as follows. Firstly, following the results and techniques introduced in \cite{norn}, we obtain that $T_{LRE,c}$ and, in particular, $T_{LRE}$ is decidable. A procedure deciding a formula from $T_{LRE,c}$ would first construct for each variable (string or integer), based on the regular expression constraints and length constraints which involve it, a finite automaton, then reduce the problem of checking the satisfiability of the formula to checking whether the constructed automata accept at least one string. A similar approach shows that $T_{RE,n}$ is decidable.

We observe that the presence of complements in regular expressions is an inherent source of complexity for these procedures. Indeed, we can easily encode the universality problem for regular expressions as a formula in the theory $T_{RE}$. Moreover, given a regex $R$ of length $n$ over an alphabet $\Sigma$, deciding whether $L(R)=\Sigma^*$ is equivalent to deciding the satisfiability of the formula $\varphi$ of $T_{RE}$ consisting of the atoms $x\in \overline{R}$ and $x\in \Sigma^*$. Accordingly, by the results from~\cite{stockmeyer}, if the choice for $R$ is restricted to regular expressions with at least $k$ stacked complements, then there exists a positive rational number $c$ such that the considered problems are not contained in NSPACE$\left({\underbrace{{2
  {{{^{2\vphantom{h}}}^{2\vphantom{h}}}^{\cdots\vphantom{h}}}^{2\vphantom{h}}}
}_{\text{$k-1$ times}}}^{cn}\right)$. In other words, the depth of the stack of complements of the formula translates to the height of the tower of exponents in the complexity of deciding that formula $\varphi$. On the other hand, if we only consider regular expressions without stacked complements, then the decision problems for the considered theories are PSPACE-complete. Indeed, the automata-based approach described above can be implemented to work in nondeterministic polynomial space; strongly related complexity results are obtained in \cite{LinB16,LinM18}. 

At the opposite end of the spectrum is the theory $T_{LRE,n,c}$, which is undecidable. Indeed, one can show that the more specific theory $T_{RE,n,c}$ (i.e. disallowing arithmetic over length) has equivalent expressive power to the theory of word equations with regular constraints, a predicate allowing the comparison of the length of string terms, and the $numstr$ predicate. Therefore, using the techniques from~\cite{rp2018strings}, one can show that the theory $T_{LRE,n,c}$, in which we additionally allow arithmetic over length, is undecidable. (See Theorem~\ref{thm:undecWE} in the Appendix.)

The inherent extremely high complexity of the satisfiability problems from $T_{RE}$ and of the theories extending it, as well as the undecidability of the respective problem for $T_{LRE,n,c}$, suggest that the usage of heuristics will decisively influence any algorithmic approaches to solving these problems in practice.

\section{Length-Aware Regular Expression Algorithm}
\label{sec:algorithm}


This section outlines the high-level algorithm used by \toolname{} to solve the satisfiability problem for $T_{LRE}$, and its extension based on length-aware heuristics. \looseness=-1

\subsection{High-Level Algorithm}

\begin{algorithm}[t!]
  \scriptsize
  \caption{The length-aware algorithm for the theory $T_{LRE}$ of regex and integer constraints} \label{alg:highlevel}
  \SetKwFunction{ComputeLengthAbstraction}{ComputeLengthAbstraction}
  
  \Input{Conjunction $\phi$ of constraints of the form $S \in RE$, and conjunction $\psi$ of linear integer arithmetic constraints}
  \Output{SAT or UNSAT}
  
  \ForAll{constraints $S \in RE$ in $\phi$}{ \label{line:lengthabs1}
    $L_{S} \gets$ \ComputeLengthAbstraction{$S$} \;
    $L_{RE} \gets$ \ComputeLengthAbstraction{$RE$} \; \label{line:regexlengthabstraction}
    \If{$\psi \cup L_{S} \cup L_{RE}$ inconsistent}{ \label{line:regexinconsistent}
      \Return{UNSAT} \label{line:regexreturnunsat}
    }
    refine $L_{S}$ as tightly as possible with respect to $L_{RE}$\; \label{line:refinelength}
  } \label{line:lengthabs2} 
  
  \ForAll{strings $S_i$ occurring in $\phi$}{ \label{line:intersect1}
    let $\mathcal{R}$ be the set of all regexes $RE$ in all terms $S_i \in RE$ \;
    
    $I \gets$ intersection of all regular expressions in $\mathcal{R}$ \; \label{line:intersect}
    \If{$I$ is empty}{
      \Return{UNSAT}
    }
  } \label{line:intersect2} 
  
  $\mathcal{L}_{S} \gets$ the union of all length abstractions $L_{S}$\;
  $\mathcal{L}_{RE} \gets$ the union of all length abstractions $L_{RE}$\;
  
  \ForAll{solutions $M$ of the system $\psi \cup \mathcal{L}_S \cup \mathcal{L}_{RE}$}{ \label{line:alllengths}
    \ForAll{strings $S$ occurring in $\phi$}{ \label{line:lengthsearchstart}
      $len(S) \gets M[len(S)]$ \;
      let $\mathcal{A}$ be the set of all automata for all regexes $RE$ in all terms $S_i \in RE$ \;
      determinize all accepting paths of length $len(S)$ for each $A$ in $\mathcal{A}$ \;
      \eIf{each automaton has an accepting path of length $len(S)$ and all paths are character-consistent}{ \label{line:characterconsistent}
        continue
      }{
        stop processing and try the next solution $M$
      }
    } 
    \If{all strings $S$ were processed successfully with respect to this $M$}{
      \Return{SAT}
    } \label{line:lengthsearchend}
  } 
  
  \Return{UNSAT} \label{line:nosolutions}
  
\end{algorithm}

The pseudocode presented in Algorithm~\ref{alg:highlevel} is shown at a high level
that captures the essence of the procedure being performed. 
Implementation-specific details are omitted for clarity.
\toolname{} incorporates a version of this algorithm as part of a
DPLL(T)-style interaction with a core solver for Boolean combinations of atoms
and other theory solvers able to handle arithmetic constraints and other terms.
The tool handles string concatenation, string equality,
and other string terms and predicates besides regex membership and string length via existing support in Z3str3,
and leverages Z3's integer arithmetic solver for arithmetic reasoning and model construction.
This high-level presentation is expanded in Section~\ref{sec:heuristics}, where we describe several heuristics used in our implementation as part of the \toolname{} tool.

The algorithm takes as input a conjunction $\phi$ of regex membership constraints
and a conjunction $\psi$ of linear integer arithmetic constraints over the lengths of string variables appearing in $\phi$.
Without loss of generality, it is assumed that all constraints in $\phi$ are positive; negative
constraints $S \not\in RE$ can be replaced with the positive complement $S \in \overline{RE}$.
The algorithm returns SAT if there is a satisfying assignment to all string variables consistent with the regex constraints $\phi$ and length constraints $\psi$.
It is assumed that the algorithm has access to a procedure for checking the consistency of
linear integer arithmetic constraints and for obtaining satisfying assignments to these constraints
(in our implementation, this is fulfilled by Z3's arithmetic solver).

Lines~\ref{line:lengthabs1}--\ref{line:lengthabs2} check whether the length information implied by $\phi$ is consistent with $\psi$.
The function \texttt{ComputeLengthAbstraction} takes as input either a string term $S$ or a regex
$RE$ and computes a system of length constraints corresponding to either an abstraction of derived length information from string constraints
or an abstraction of length information derived from the regex $RE$. 
For example, given the regex $(abc)^{*}$ as input, \texttt{ComputeLengthAbstraction} would construct the length abstraction $S \in (abc)^{*} \to len(S) = 3n, n \ge 0$ for a fresh integer variable $n$.
If the length abstractions are inconsistent with the given length constraints, there can be no
solution which satisfies both the length and regex constraints, and hence the algorithm returns UNSAT. Otherwise, line~\ref{line:refinelength} refines the length abstraction $L_S$ with respect to the regex $RE$. This improves the efficiency of finding solutions to the augmented system of length constraints later in the algorithm. In our implementation, the lower and upper bounds of the length of $S$ are checked against the lengths of accepting paths in the automaton for $RE$.
For instance, if $L_S$ implies that $len(S) \ge 5$, but the shortest accepting path
in the automaton has length 7, the lower bound is refined to $len(S) \ge 7$.

Lines \ref{line:intersect1}--\ref{line:intersect2} check that the intersection of all automata constraining each string variable
is non-empty. Although intersecting automata is relatively expensive (as it runs in quadratic time w.r.t. the size of the intersected automata), it is still more efficient to do this before enumerating length assignments, and taking the intersection here is necessary to maintain soundness.
(The heuristics in Section~\ref{sec:heuristics} illustrate some methods by which this computation can be made more efficient or even avoided.)

At this point in the algorithm, the length constraints $\psi$ (as well as our length abstractions)
are consistent, and the regex constraints $\phi$ are consistent, so we can
check the joint consistency of $\phi \land \psi$.
The algorithm enumerates each possible model $M$ of our augmented system of length constraints
(line~\ref{line:alllengths}), obtaining
a candidate assignment for the length of each string in $\phi$.
It is then necessary to check whether solutions of this length actually exist for each
regex constraint in $\phi$.
The algorithm does this by enumerating all accepting paths in the corresponding automata
having that respective length.
Line~\ref{line:characterconsistent} checks the character-consistency of each accepting path
of length $len(S)$ in each automaton $A$. Here, by ``character-consistent'', we mean that some accepting path in each automaton follows transitions corresponding to the same characters of $S$.
This can be checked in various ways, as the total number of paths, as well as their size, is always finite. For example, in our implementation of this algorithm in \toolname{}, the solver converts each path to a disjunction of bit-vector character constraints and checks the satisfiability of the resulting system. If each string $S$ is mapped to a path through the automata corresponding to its regex constraints, of a length that is consistent with the arithmetic constraints, then the system is satisfiable and the algorithm returns SAT. Otherwise, this process repeats for the next length assignment $M$.
Line~\ref{line:nosolutions} is reached if and only if no solution of the combined system of length constraints has a satisfying assignment over string variables with that length. If this happens, the constraints $\phi$ and $\psi$ are not jointly satisfiable and the algorithm returns UNSAT. 

We demonstrate soundness, completeness, and termination of Algorithm~\ref{alg:highlevel} as follows. On line~\ref{line:regexinconsistent} we check whether $\psi \cup L_{S} \cup L_{RE}$ is satisfiable. If not, we return UNSAT on line~\ref{line:regexreturnunsat}. Therefore, if control reaches line~\ref{line:alllengths}, $\psi \cup L_{S} \cup L_{RE}$ must be satisfiable. By construction of $L_{RE}$, every solution to $\phi$ is also a solution to $L_{RE}$ and vice versa (all solutions are preserved). It follows that satisfiability of $\psi \cup L_{S} \cup L_{RE}$ at line~\ref{line:alllengths} implies the existence of a smallest valid string solution that satisfies all length constraints. The loop body (lines~\ref{line:lengthsearchstart}--\ref{line:lengthsearchend}) finds and outputs this smallest valid string solution.
Therefore, Algorithm~\ref{alg:highlevel} is a decision procedure for the QF first-order theory of regex constraints, string length, and linear integer arithmetic. 

The conditionals inside the loop body (lines~\ref{line:lengthsearchstart}--\ref{line:lengthsearchend})  suggest that we may discover unsatisfiability. As previously mentioned, \toolname{} supports other high-level operations that are not part of this theory via existing support in Z3str3. These conditionals provide support for including these operations, which may render the theory undecidable.
These terms are not in Algorithm~\ref{alg:highlevel} because their inclusion would make the algorithm incomplete (see Section~\ref{sec:theory}).
Algorithm~\ref{alg:highlevel} describes the part of the implementation which is novel and complete.

\section{Length-Aware and Prefix/Suffix Heuristics in \toolname{}} \label{sec:heuristics}

In this section, we describe the length-aware heuristics that are used in \toolname{}
to improve the efficiency of regular expression reasoning.
We present an empirical evaluation of the power of these heuristics in Section~\ref{sec:empiricalheuristics}.

\subsection{Computing Length Information from Regexes}

The first length-aware heuristic we use is when constructing the length abstraction on line~\ref{line:regexlengthabstraction}. If the regex can be easily converted to a system of equations describing the lengths of all possible solutions (for instance, in the case when it does not contain any complements or intersections), this system can be returned as the abstraction without constructing the automaton for $RE$ yet. 
As previously illustrated, for example, given the regex $(abc)^{*}$ as input, \texttt{ComputeLengthAbstraction} would construct the length abstraction $S \in (abc)^{*} \to len(S) = 3n, n \ge 0$ for a fresh integer variable $n$.
Note that this can be done from the syntax of the regex without converting it to an automaton. Deriving length information from the automaton would be simple by, for example, constructing a corresponding unary automaton and converting to Chrobak normal form. However, performing automata construction lazily means we cannot rely on having an automaton in all cases; this technique also provides length information even when constructing an automaton would be expensive.

In cases where we cannot directly infer the length abstraction,
the heuristic will fix a lower bound on the length of words in $RE$, and possibly an upper bound if it exists. Reasoning about the length abstraction early in the procedure gives our algorithm
the opportunity to detect inconsistencies before expensive automaton operations are performed.
This gives the arithmetic solver more opportunities to propagate facts discovered by refinement
and potentially more chances to find inconsistencies or learn further derived facts.

\subsection{Optimizing Automata Operations via Length Information}

Similarly, computing the intersection $I$ in line~\ref{line:intersect} is done lazily in the implementation of \toolname{} and over several iterations of the algorithm.
The most expensive intersection operations can be performed at the end of the search, after as much other information as possible has been learned. We use the following heuristics recursively to estimate the ``cost'' of each operation without actually constructing any automata:
\begin{itemize}
    \item For a string constant, the estimated cost is the length of the string.
    \item For a concatenation or a union of two regex terms $X$ and $Y$, the estimated cost is the sum of the estimates for $X$ and $Y$.
    \item For a regex term $X^{*}$, the estimated cost is twice the estimate for $X$.
    \item For a regex term $X$ under complement, the estimated cost is the product of the estimates obtained from subterms of $X$.
\end{itemize}

In essence, the constructions which ``blow up'' the least are expected to be the least expensive and are performed first. In the 
best-case scenario, this could mean avoiding the most expensive operations completely if an intersection of smaller automata ends up being empty. In the worst case, all intersections are computed eventually, as this is necessary to maintain the soundness of our approach.

\subsection{Leveraging Length Information to Optimize Search}

Our implementation communicates integer assignments and lower/upper bounds with the external
arithmetic solver in order to prune the search space. The search for length assignments on line~\ref{line:alllengths} is done in practice as an abstraction-refinement loop involving Z3's arithmetic solver. The arithmetic solver proposes a single candidate model for the system of arithmetic constraints; the regex algorithm checks whether that model has a corresponding solution over the  regex constraints. If it does not, it asserts a conflict clause blocking that combination of length assignments and regex constraints from being considered again. This is necessary in a DPLL(T)-style solver such as Z3 in order to handle Boolean structure in the input formula.

\subsection{Constructing Over-Approximated Prefixes/Suffixes to Find Empty Intersections} 

As previously mentioned, computing automata intersections is expensive, but in many cases it is necessary in order to prove that a set of intersecting regex constraints has no solution. In some cases, this can be done ``by inspection'' from the syntax of the regex terms without constructing or intersecting any automata. From the structure of a regular expression, it is easy to determine the first letter of all possible accepted strings that it matches. If several regexes would be intersected over the same string term, this is used to check whether these regexes have a prefix of length one in common. If they do not, their intersection cannot contain any strings other than the empty string (and we can also check whether the empty string could be accepted by a similar syntactic approach). 
A similar construction for suffixes of length 1 is also used.
In this way, the heuristic can infer that the intersection of several regex constraints is either empty, resulting in a conflict clause, or can only contain the empty string, resulting in a new fact and a simplification of the formula -- without actually constructing the intersection or, in fact, constructing any automata for these regexes.

For example, consider the following regex constraints on a variable $X$:
\begin{align*}
    X & \in (abc)^{*} \\
    X & \in a^{+}\;|\;b^{+}
\end{align*}
The prefix/suffix heuristic would proceed as follows. In the first constraint,
the pattern $abc$ is matched zero or more times, and could be empty; therefore,
either $X$ is empty or it must start with $a$ and end with $c$.
In the second constraint, each pattern is matched at least once, and cannot be empty;
therefore $X$ must start with $a$ or $b$, end with $a$ or $b$,
and cannot be the empty string.
Observe that according to the prefix heuristic, these constraints are consistent,
since $a$ is a valid prefix of both regexes;
however, according to the suffix heuristic, they are inconsistent,
as the possible suffixes $a$ and $b$ of the second regex do not include $c$,
and the empty string is not a solution to both constraints.
Hence we conclude that these constraints are not jointly satisfiable,
and assert a conflict clause.

As demonstrated, all of these facts are derived from the syntax of the regular expression;
the heuristic does not need to construct an automaton for either constraint
in order to reason about them.
By constructing an over-approximation of the possible solutions of $X$ allowed by regex constraints,
the heuristic can determine that their intersection is empty
(or can only contain the empty string) without computing it precisely
(which, as previously mentioned, is expensive to do and also requires constructing automata first).
We limit this heuristic to the first letter as each additional letter causes the space required to keep track of these prefixes to increase exponentially.

\section{Empirical Results}
\begin{figure}[t!]
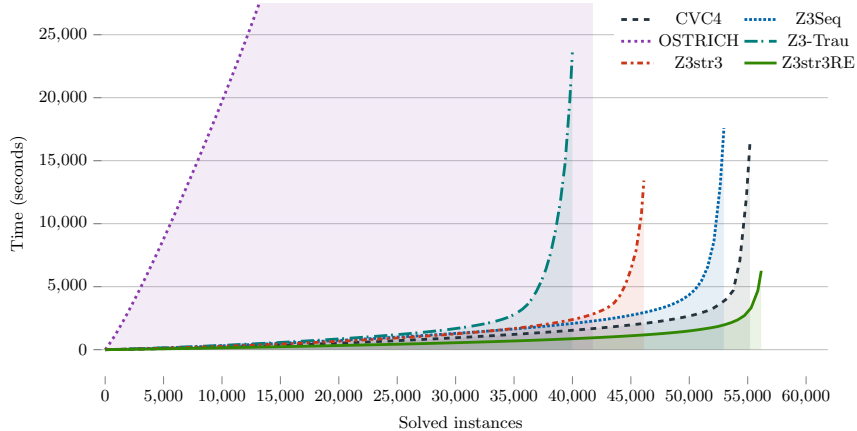

\begin{center}
\resizebox{.95\textwidth}{!}{\cactusTotalCroped{}}
\end{center}

\vspace{-0.75cm}

\caption{Cactus plot summarizing performance on all benchmarks. \toolname{} has the best overall performance.\label{fig:catctus_total}} 
\end{figure}

\tableTotal{}

\label{sec:results}

In this section, we describe the empirical evaluation of \toolname{}, our implementation of the
length-aware regular expression algorithm
presented in Section~\ref{sec:algorithm}, to validate the effectiveness of the techniques presented.
We evaluate the correctness and efficiency of our tool against other solvers,
as well as against different configurations of the tool in order to demonstrate
the efficacy of the heuristics we describe.

\subsection{Empirical Setup and Solvers Used}

We compare \toolname{} 
against five other leading string solvers available today.
CVC4~\cite{CVC4-CAV14} is a
general-purpose SMT solver which reasons about strings and regular
expressions algebraically.  Z3str3~\cite{Z3str3} is the latest solver
in the Z3-str family, and uses a reduction to word equations to reason
about regular expressions.
\toolname{} is based on Z3str3 except for the length-aware algorithm and heuristics described in Sections~\ref{sec:algorithm} and~\ref{sec:heuristics}.
Z3seq~\cite{stanford2020symbolic} is the Z3 sequence solver,
implemented by Nikolaj Bj\o{}rner and others at Microsoft Research,
as part of the Z3 theorem prover. Z3seq uses a new theory of derivatives for solving extended regular expressions.
Z3-Trau~\cite{z3-trau} is also based on Z3 and uses an automata-based approach known as ``flat automata'' with both under- and over-approximations.
OSTRICH~\cite{ostrich} uses a reduction from string functions (including word equations) to a model-checking problem that is solved using the SLOTH tool and an implementation of IC3.
We used CVC4's binary version 1.8,
commit \texttt{59e9c87} 
of Z3str3, the sequence solver included in Z3's binary version 4.8.9, Z3-Trau commit \texttt{1628747}, and OSTRICH version 1.0.1.
All of these tools support the full SMT-LIB standard for strings.
We did not compare against the Z3str2~\cite{cav15} or Norn~\cite{norn} solvers as neither tool
supports the \texttt{str.to\_int} or \texttt{str.from\_int} terms which
represent string-number conversion, which are used in some sanitizer
benchmarks.  Additionally, Norn does not support many of the other
high-level string terms such as \texttt{indexof} or \texttt{substr} which
are used in the benchmarks. The ABC~\cite{abc-cav15} solver handles string
and length constraints by conversion to automata. However, their method over-approximates the solution set of the input formula which may be unsound. Thus, we excluded ABC from our
evaluation. We also were unable to evaluate against Trau~\cite{AbdullaFlattenAndConquer} as the provided source code did not compile.
All evaluations were performed on a
server running  Ubuntu 18.04.4 LTS with two AMD EPYC 7742 processors and 2TB of memory using the \text{ZaligVinder}~\cite{zaligvinder} benchmarking framework. 
A 20 second timeout was used. We cross-verified the models generated by each solver
for satisfiable instances against all competing solvers.


\begin{figure}[t!]

\resizebox{.95\textwidth}{!}{\pgfplotsset{scaled x ticks=false}\pgfplotsset{scaled y ticks=false}\begin{tikzpicture}\begin{axis}[title=Automatark,xmin=-1000,xlabel=Solved instances,ylabel=Time (seconds),,legend columns=2,legend style={at={(1,0.95)},nodes={scale=1, transform shape}, fill=none,anchor=east,align=center },axis line style={draw=none}, xtick pos=left, ytick pos=left, ymajorgrids=true, legend style={draw=none,fill=white},x post scale=2,y post scale=1.25,ymax=12500]
\addplot[name path=pathCVC4 , colourCVC4, line width=1.25pt,dashed] coordinates {(1,0.008)(101,1.1842338308457712)(302,4.144800995024875)(503,7.587910447761193)(704,11.333179104477614)(905,15.330930348258706)(1106,19.531940298507465)(1307,23.920925373134327)(1508,28.483)(1709,33.18437313432836)(1910,38.04372139303482)(2111,43.06094029850746)(2312,48.235)(2513,53.547522388059704)(2714,58.996686567164176)(2915,64.59732835820896)(3116,70.337)(3317,76.19528358208956)(3518,82.20574626865671)(3719,88.361)(3920,94.60893034825871)(4121,101.006)(4322,107.48502985074627)(4523,114.10801492537314)(4724,120.844)(4925,127.69661194029851)(5126,134.699)(5327,141.81149253731343)(5528,149.04737313432838)(5729,156.406)(5930,163.86664676616914)(6131,171.478)(6332,179.14918407960198)(6533,186.97)(6734,194.85467164179104)(6935,202.88731343283584)(7136,211.061)(7337,219.3411791044776)(7538,227.76917910447762)(7739,236.32)(7940,244.98292537313432)(8141,253.796)(8342,262.72932835820893)(8543,271.78371641791045)(8744,280.982)(8945,290.2904925373134)(9146,299.742104477612)(9347,309.341)(9548,319.0860298507463)(9749,328.98271641791047)(9950,339.03151243781093)(10151,349.232)(10352,359.57807462686566)(10553,370.0826616915423)(10754,380.74748258706467)(10955,391.57537810945274)(11156,402.5727462686568)(11357,413.753)(11558,425.1070895522388)(11759,436.67216417910447)(11960,448.458)(12161,460.41110447761196)(12362,472.56805970149253)(12563,484.93310447761195)(12764,497.5041641791045)(12965,510.3073631840796)(13166,523.3626119402985)(13367,536.6901094527364)(13568,550.285671641791)(13769,564.1436119402985)(13970,578.2974626865672)(14171,592.7570746268657)(14372,607.5117462686567)(14573,622.5840845771145)(14774,637.9978258706467)(14975,653.7967014925373)(15176,669.9916119402984)(15377,686.6112686567164)(15578,703.6454975124378)(15779,721.1384427860697)(15980,739.1880248756219)(16181,757.812527363184)(16382,777.0580995024876)(16583,797.0034129353235)(16784,817.7288557213931)(16985,839.3852139303483)(17186,861.9961393034826)(17387,885.653184079602)(17588,910.7359850746269)(17789,937.4878059701492)(17990,966.3835820895522)(18191,997.829039800995)(18392,1032.7199303482587)(18593,1072.2673681592041)(18794,1119.2120796019901)(18995,1179.8213482587066)(19196,1272.2905472636817)(19397,1467.4282437810944)(19588,2043.435502762431)(19680,2829.394)};
\addlegendentry{CVC4}

\addplot[name path=pathZ3Seq , colourZ3Seq, line width=1.5pt, densely dotted] coordinates {(1,0.018)(100,2.4031507537688443)(299,7.985452261306532)(498,14.293085427135678)(697,21.100899497487436)(896,28.286884422110553)(1095,35.819814070351754)(1294,43.68581407035175)(1493,51.834482412060304)(1692,60.27137688442211)(1891,68.95945226130654)(2090,77.90193969849247)(2289,87.10191959798995)(2488,96.54861809045227)(2687,106.22246231155779)(2886,116.13248743718593)(3085,126.26388442211055)(3284,136.6150351758794)(3483,147.16488442211053)(3682,157.91623618090452)(3881,168.87436683417084)(4080,180.0364824120603)(4279,191.3924974874372)(4478,202.93888442211053)(4677,214.6761256281407)(4876,226.61836683417084)(5075,238.78113567839196)(5274,251.14168844221103)(5473,263.6990904522613)(5672,276.45957286432156)(5871,289.42120603015076)(6070,302.5790904522613)(6269,315.9372060301508)(6468,329.4968442211055)(6667,343.27732663316584)(6866,357.2985075376884)(7065,371.5474623115578)(7264,386.03910050251255)(7463,400.76687939698496)(7662,415.72773869346736)(7861,430.93682412060303)(8060,446.3960351758794)(8259,462.13759798994977)(8458,478.14761306532665)(8657,494.4278844221106)(8856,510.99368341708544)(9055,527.8653768844222)(9254,545.0245025125628)(9453,562.4803668341709)(9652,580.2216984924623)(9851,598.2776783919597)(10050,616.661055276382)(10249,635.386688442211)(10448,654.4905829145729)(10647,673.9630703517587)(10846,693.7858894472362)(11045,713.996703517588)(11244,734.624)(11443,755.6792412060302)(11642,777.2059698492462)(11841,799.2096381909548)(12040,821.6998040201005)(12239,844.7337587939699)(12438,868.3513768844222)(12637,892.5619246231156)(12836,917.421809045226)(13035,942.9751859296482)(13234,969.2891306532663)(13433,996.4499396984924)(13632,1024.5357085427136)(13831,1053.6251658291458)(14030,1083.795231155779)(14229,1115.1595829145729)(14428,1147.8876733668342)(14627,1181.9985728643214)(14826,1217.6542060301506)(15025,1255.2216030150755)(15224,1294.8965728643216)(15423,1337.067743718593)(15622,1382.2271457286433)(15821,1431.0464974874371)(16020,1483.9062864321609)(16219,1541.339663316583)(16418,1604.2799296482413)(16617,1673.7163316582914)(16816,1750.595296482412)(17015,1836.1683417085428)(17214,1932.0301256281407)(17413,2041.5591005025126)(17612,2167.497527638191)(17811,2314.6906331658292)(18010,2490.1677889447237)(18209,2705.6372964824122)(18408,2979.368703517588)(18607,3330.8951557788946)(18806,3812.0925879396987)(19005,4503.1137085427135)(19204,5596.160979899498)(19398,7554.025391534392)(19494,9018.425)};
\addlegendentry{Z3Seq}

\addplot[name path=pathOSTRICH , colourOSTRICH, line width=1.5pt, dotted] coordinates {(1,0.975)(86.5,160.40451744186046)(258.5,568.371726744186)(430.5,1020.1490581395349)(602.5,1491.5828313953489)(774.5,1978.428203488372)(946.5,2478.8642790697672)(1118.5,2991.718319767442)(1290.5,3515.605668604651)(1462.5,4048.8564127906975)(1634.5,4591.111470930233)(1806.5,5142.068825581395)(1978.5,5700.826761627907)(2150.5,6266.638093023255)(2322.5,6839.02025)(2494.5,7417.995715116279)(2666.5,8003.350308139535)(2838.5,8595.512889534883)(3010.5,9194.500488372094)(3182.5,9799.89647093023)(3354.5,10411.233459302324)(3526.5,11028.528860465116)(3698.5,11651.785656976745)(3870.5,12280.580441860466)(4042.5,12915.285488372094)(4214.5,13555.690843023256)(4386.5,14201.43834883721)(4558.5,14852.78941860465)(4730.5,15509.399779069769)(4902.5,16171.249122093024)(5074.5,16838.424034883723)(5246.5,17510.412656976743)(5418.5,18187.344343023255)(5590.5,18869.548732558138)(5762.5,19556.864976744186)(5934.5,20249.381808139537)(6106.5,20946.958389534884)(6278.5,21649.58336627907)(6450.5,22357.77808139535)(6622.5,23071.195552325582)(6794.5,23789.736854651164)(6966.5,24513.63943604651)(7138.5,25242.65156395349)(7310.5,25977.21768604651)(7482.5,26717.431029069765)(7654.5,27463.165959302325)(7826.5,28214.375976744188)(7998.5,28971.820965116276)(8170.5,29736.11637790698)(8342.5,30506.708970930234)(8514.5,31283.620895348835)(8686.5,32067.447406976746)(8858.5,32858.46465697674)(9030.5,33657.41348255814)(9202.5,34463.56119186047)(9374.5,35277.24540697674)(9546.5,36098.36647093023)(9718.5,36927.83530813953)(9890.5,37766.38743023256)(10062.5,38615.028005813954)(10234.5,39473.9209883721)(10406.5,40343.89848837209)(10578.5,41224.68872093024)(10750.5,42116.23061627907)(10922.5,43018.07281395349)(11094.5,43932.07025581395)(11266.5,44859.086139534884)(11438.5,45800.00022674418)(11610.5,46757.24031976744)(11782.5,47731.63272093023)(11954.5,48725.404377906976)(12126.5,49739.11005232558)(12298.5,50773.38852325582)(12470.5,51829.505197674414)(12642.5,52910.343720930236)(12814.5,54015.646308139534)(12986.5,55146.71020930233)(13158.5,56306.85422093023)(13330.5,57499.62365697674)(13502.5,58727.06503488372)(13674.5,59989.85539534884)(13846.5,61290.397122093025)(14018.5,62629.64277906976)(14190.5,64012.68377906977)(14362.5,65443.07024418604)(14534.5,66930.27169186047)(14706.5,68484.83087790698)(14878.5,70109.76006976745)(15050.5,71812.81211627906)(15222.5,73599.28979651163)(15394.5,75471.45331395349)(15566.5,77427.86972093023)(15738.5,79466.93958139535)(15910.5,81611.69123255815)(16082.5,83888.74316860466)(16254.5,86323.82327906977)(16426.5,88945.34439534883)(16598.5,91778.41816860465)(16762.5,94732.11610256409)(16842,96248.051)};
\addlegendentry{OSTRICH}

\addplot[name path=pathZ3-Trau , colourZ3-Trau, line width=1.5pt,dash pattern={on 7pt off 2pt on 1pt off 3pt}] coordinates {(1,0.02)(55,1.5962660550458716)(164,5.296467889908256)(273,9.415275229357798)(382,13.858844036697247)(491,18.549073394495412)(600,23.447532110091743)(709,28.570752293577982)(818,33.89162385321101)(927,39.35345871559633)(1036,44.95164220183486)(1145,50.681532110091744)(1254,56.5404495412844)(1363,62.53289908256881)(1472,68.66492660550459)(1581,74.91783486238532)(1690,81.29174311926606)(1799,87.79111926605505)(1908,94.40571559633028)(2017,101.12526605504587)(2126,107.97308256880734)(2235,114.95152293577982)(2344,122.05584403669725)(2453,129.2725504587156)(2562,136.58008256880734)(2671,143.98291743119265)(2780,151.49549541284404)(2889,159.12216513761467)(2998,166.86928440366972)(3107,174.7324495412844)(3216,182.71653211009175)(3325,190.83343119266056)(3434,199.0868990825688)(3543,207.47312844036696)(3652,215.9771651376147)(3761,224.60355045871557)(3870,233.37045871559633)(3979,242.28308256880734)(4088,251.33661467889908)(4197,260.568128440367)(4306,269.9682752293578)(4415,279.51022018348624)(4524,289.20029357798165)(4633,299.0655229357798)(4742,309.1098440366972)(4851,319.327623853211)(4960,329.7172293577982)(5069,340.298871559633)(5178,351.0900917431193)(5287,362.0894495412844)(5396,373.31945871559634)(5505,384.7735412844036)(5614,396.4742018348624)(5723,408.4829082568807)(5832,420.81974311926604)(5941,433.46271559633027)(6050,446.44219266055046)(6159,459.8125871559633)(6268,473.59859633027526)(6377,487.8687247706422)(6486,502.6575137614679)(6595,517.9996880733945)(6704,533.9971100917431)(6813,550.8093486238532)(6922,568.4256697247706)(7031,586.8888532110092)(7140,606.3400366972478)(7249,626.8848073394495)(7358,648.4582844036697)(7467,671.3771100917431)(7576,695.9147247706422)(7685,722.3347247706422)(7794,751.1013394495413)(7903,782.6003302752295)(8012,817.3175779816514)(8121,855.3683944954129)(8230,897.7217889908256)(8339,945.1402844036697)(8448,998.4036513761469)(8557,1058.9789449541286)(8666,1127.7783027522937)(8775,1205.970376146789)(8884,1296.1327706422019)(8993,1399.8933211009173)(9102,1524.485724770642)(9211,1673.126844036697)(9320,1854.984853211009)(9429,2077.1858256880732)(9538,2350.382669724771)(9647,2693.2037981651374)(9756,3106.7357339449545)(9865,3598.6923302752293)(9974,4208.8423486238535)(10083,4945.459944954128)(10192,5780.918174311927)(10301,6738.831366972478)(10410,7911.462339449542)(10519,9369.216678899083)(10623,11060.527969696968)(10674,11995.347)};
\addlegendentry{Z3-Trau}

\addplot[name path=pathZ3str3 , colourZ3str3, line width=1.5pt,dash dot] coordinates {(1,0.013)(69.5,1.3421521739130435)(207.5,4.231333333333333)(345.5,7.407420289855072)(483.5,10.861166666666666)(621.5,14.597144927536233)(759.5,18.629594202898552)(897.5,22.961195652173913)(1035.5,27.587072463768116)(1173.5,32.47027536231884)(1311.5,37.64997826086957)(1449.5,43.067282608695656)(1587.5,48.72817391304348)(1725.5,54.58302173913043)(1863.5,60.6218768115942)(2001.5,66.8345)(2139.5,73.22502173913043)(2277.5,79.76656521739132)(2415.5,86.458884057971)(2553.5,93.30839855072465)(2691.5,100.28967391304347)(2829.5,107.42465217391305)(2967.5,114.711)(3105.5,122.1363115942029)(3243.5,129.6967391304348)(3381.5,137.42286956521738)(3519.5,145.29404347826087)(3657.5,153.30132608695652)(3795.5,161.45952173913042)(3933.5,169.7475072463768)(4071.5,178.1752391304348)(4209.5,186.7503768115942)(4347.5,195.46583333333334)(4485.5,204.34184782608693)(4623.5,213.38526086956523)(4761.5,222.57160869565217)(4899.5,231.90326086956523)(5037.5,241.3661739130435)(5175.5,250.97883333333334)(5313.5,260.7344420289855)(5451.5,270.6438333333333)(5589.5,280.7121594202899)(5727.5,290.9488695652174)(5865.5,301.3385434782609)(6003.5,311.8785217391304)(6141.5,322.5833115942029)(6279.5,333.4611086956522)(6417.5,344.5050652173913)(6555.5,355.71197826086956)(6693.5,367.10767391304347)(6831.5,378.6940652173913)(6969.5,390.475615942029)(7107.5,402.44832608695657)(7245.5,414.61139855072463)(7383.5,427.0071376811594)(7521.5,439.64460869565215)(7659.5,452.50354347826084)(7797.5,465.608963768116)(7935.5,478.95010869565215)(8073.5,492.5117971014493)(8211.5,506.31465217391303)(8349.5,520.3938985507247)(8487.5,534.7307246376812)(8625.5,549.3698478260869)(8763.5,564.2815072463768)(8901.5,579.4927463768116)(9039.5,595.0143985507246)(9177.5,610.860115942029)(9315.5,627.0620942028986)(9453.5,643.603768115942)(9591.5,660.537420289855)(9729.5,677.8826811594203)(9867.5,695.6639855072464)(10005.5,713.9109347826086)(10143.5,732.6206956521739)(10281.5,751.8423115942029)(10419.5,771.6738188405797)(10557.5,792.1500724637681)(10695.5,813.2650942028986)(10833.5,835.1250942028986)(10971.5,857.8826304347826)(11109.5,881.5716449275362)(11247.5,906.2355942028986)(11385.5,932.1261811594203)(11523.5,959.4113623188406)(11661.5,988.3594420289854)(11799.5,1019.2943478260869)(11937.5,1052.8078043478263)(12075.5,1089.4063623188406)(12213.5,1130.4668043478262)(12351.5,1178.123384057971)(12489.5,1235.1250144927537)(12627.5,1312.1216594202897)(12765.5,1454.7297971014493)(12903.5,1748.689804347826)(13041.5,2145.823347826087)(13179.5,2607.386695652174)(13317.5,3265.4395289855074)(13455.5,4661.4111014492755)(13529.5,5858.6839)(13536,5968.054)};
\addlegendentry{Z3str3}

\addplot[name path=pathZ3str3RE , colourZ3str3RE, line width=1.5pt] coordinates {(1,0.007)(102,0.8843793103448275)(305,2.900591133004926)(508,5.1892807881773395)(711,7.640428571428571)(914,10.246)(1117,12.90135960591133)(1320,15.707)(1523,18.549)(1726,21.401891625615765)(1929,24.401)(2132,27.446)(2335,30.491)(2538,33.60024137931035)(2741,36.844)(2944,40.092)(3147,43.36793596059113)(3350,46.796)(3553,50.247)(3756,53.698)(3959,57.18717733990148)(4162,60.826)(4365,64.48)(4568,68.134)(4771,71.788)(4974,75.544)(5177,79.401)(5380,83.258)(5583,87.115)(5786,90.97210344827586)(5989,94.937)(6192,98.997)(6395,103.057)(6598,107.117)(6801,111.22631527093596)(7004,115.48)(7207,119.743)(7410,124.006)(7613,128.32042857142858)(7816,132.778)(8019,137.244)(8222,141.72929064039408)(8425,146.366)(8628,151.035)(8831,155.74156650246303)(9034,160.598)(9237,165.47066995073894)(9440,170.46)(9643,175.535)(9846,180.68586206896552)(10049,185.96210344827585)(10252,191.348)(10455,196.83924630541873)(10658,202.476)(10861,208.19635960591134)(11064,214.07209852216747)(11267,220.101)(11470,226.24101477832514)(11673,232.53204926108373)(11876,238.97304926108373)(12079,245.575)(12282,252.32614285714288)(12485,259.2397290640394)(12688,266.315)(12891,273.548)(13094,280.9327290640394)(13297,288.4885665024631)(13500,296.2226206896552)(13703,304.1382906403941)(13906,312.2320344827586)(14109,320.50103940886703)(14312,328.965842364532)(14515,337.6365862068966)(14718,346.5164778325123)(14921,355.576842364532)(15124,364.8534482758621)(15327,374.34124630541874)(15530,384.0659359605911)(15733,394.04527093596056)(15936,404.3071428571428)(16139,414.8373842364532)(16342,425.65860591133)(16545,436.7794137931034)(16748,448.2154236453202)(16951,460.03211330049265)(17154,472.2825024630542)(17357,484.9897339901478)(17560,498.22120197044336)(17763,512.0137684729063)(17966,526.4703103448276)(18169,541.744748768473)(18372,557.971472906404)(18575,575.284487684729)(18778,594.1859458128079)(18981,615.3810640394088)(19184,640.7673349753695)(19387,675.6836699507389)(19590,746.8065714285715)(19774.5,1022.074578313253)(19859,1525.15)};
\addlegendentry{Z3str3RE}

\path[name path=axisCVC4] (axis cs:0,0) -- (axis cs:19680,0);
\addplot [thick,color=colourCVC4,fill=colourCVC4,fill opacity=0.1] fill between [of=pathCVC4 and axisCVC4];
\path[name path=axisZ3Seq] (axis cs:0,0) -- (axis cs:19494,0);
\addplot [thick,color=colourZ3Seq,fill=colourZ3Seq,fill opacity=0.1] fill between [of=pathZ3Seq and axisZ3Seq];
\path[name path=axisOSTRICH] (axis cs:0,0) -- (axis cs:16842,0);
\addplot [thick,color=colourOSTRICH,fill=colourOSTRICH,fill opacity=0.1] fill between [of=pathOSTRICH and axisOSTRICH];
\path[name path=axisZ3-Trau] (axis cs:0,0) -- (axis cs:10674,0);
\addplot [thick,color=colourZ3-Trau,fill=colourZ3-Trau,fill opacity=0.1] fill between [of=pathZ3-Trau and axisZ3-Trau];
\path[name path=axisZ3str3] (axis cs:0,0) -- (axis cs:13536,0);
\addplot [thick,color=colourZ3str3,fill=colourZ3str3,fill opacity=0.1] fill between [of=pathZ3str3 and axisZ3str3];
\path[name path=axisZ3str3RE] (axis cs:0,0) -- (axis cs:19859,0);
\addplot [thick,color=colourZ3str3RE,fill=colourZ3str3RE,fill opacity=0.1] fill between [of=pathZ3str3RE and axisZ3str3RE];
\end{axis}\end{tikzpicture}}

\caption{Cactus plot summarizing detailed performance on Automatark benchmark.}
\label{fig:cactus_automatark}

\end{figure}

\begin{table}[t!]

\resizebox{0.95\textwidth}{!}{
\begin{tabular}{|c |c |c |c |c |c |c |}
\hline
&CVC4&Z3Seq&OSTRICH&Z3-Trau&Z3str3&Z3str3RE\\ 
  \hline\hline 
sat &14376&14204&11461&8157&9151&\textbf{14437}\\ 
 \hline
unsat &5304&5290&5381&3817&4385&\textbf{5422}\\ 
 \hline
\hline 
 unknown &1&\textbf{0}&15&5045&406&\textbf{0}\\ 
 \hline
timeout &298&485&3122&2960&6037&\textbf{120}\\ 
 \hline
soundness error &\textbf{0}&\textbf{0}&\textbf{0}&1300&\textbf{0}&\textbf{0}\\ 
 \hline
program crashes &\textbf{0}&\textbf{0}&\textbf{0}&1063&2&\textbf{0}\\ 
 \hline
\hline 
 Total correct &19680&19494&16842&10674&13536&\textbf{19859}\\ 
 \hline
 Contribution score &1.0&1.0&\textbf{2.0}&--&0.0&0.5\\ 
 \hline
Time (s) &8789.425&18718.425&158910.126&80021.352&126825.967&\textbf{3925.150}\\ 
 \hline
Time w/o timeouts (s) &2829.425&9018.425&96470.126&20821.352&6085.967&\textbf{1525.150}\\ 
 \hline
\end{tabular}}
\vspace{0.1cm}
\caption{Detailed results for the Automatark benchmark. \toolname{} has the biggest lead with a score of 1.01.}
\label{tab:cactus_automatark}

\end{table}

\subsection{Benchmarks}\label{sec:benchmarks}
The comparison was performed on four suites of regex-based benchmarks with a total of \totalinstances{} instances. In total, almost 75\% of the instances in our evaluation came from previously published industrial benchmarks or other solver developers.
Over all instances, 10\% contain extended regular expressions (having either complement or intersection, or both).
We briefly describe each benchmark's origin and composition.

\smallskip

\noindent \textbf{\benchmarkregexlib{}} is a set of 19979 benchmarks based on a collection of real-world regex queries collected by Loris D'Antoni from the University of Wisconsin, Madison, USA. We translated the provided regexes~\cite{automatark} into SMT-LIB syntax resulting in two sets of instances: a ``simple'' set with a single regex membership predicate per instance, and a ``complex'' set with 2--5 regex membership predicates (possibly negated) over a single variable per instance. The instances in this benchmark are evenly divided between simple and complex problems.

\smallskip

\noindent \textbf{\benchmarkcollected{}} is a set of 22425 instances taken from existing benchmarks with the purpose of evaluating the performance of solvers against real-world regex instances. This benchmark includes all instances from the AppScan~\cite{Z3str2-FMSD}, BanditFuzz,\footnote{The BanditFuzz benchmark is an unpublished suite obtained via private communication with the authors.} JOACO~\cite{thome2018integrated}, Kaluza~\cite{kaluza}, Norn~\cite{norn}, Sloth~\cite{sloth}, Stranger~\cite{yu2010}, and Z3str3-regression~\cite{Z3str3} benchmarks in which at least one regex membership constraint appears.\footnote{Other benchmark suites available to us, including the PyEx, PISA, and Kausler benchmarks, did not include any regex membership constraints.}
No additional restrictions are placed on which instances were chosen besides the presence of at least one regex membership predicate.  We chose to evaluate against this benchmark in order to test the performance of solvers against instances that are already known to be challenging to solve and that have appeared in previously published and widely distributed benchmark suites. 
Additionally, these instances may contain regex terms in any context and with any other supported string operators. As a result, the benchmark is also exemplary of how string solvers perform in the presence of operations and predicates that are relevant to program analysis.

\smallskip

\noindent \textbf{\benchmarkgen{}} is a set of 4170 problems generated by the StringFuzz string instance fuzzing tool~\cite{stringfuzz}. These instances only contain regular expression and linear arithmetic constraints. The motivation in choosing this benchmark is to isolate and evaluate the regex performance of a string solver in the context of mixed regex and arithmetic constraints. Tools with better regex and arithmetic solvers should perform better.
Fuzz testing, as performed in the \textbf{\benchmarkgen{}} benchmark, has been shown to be extremely productive in discovering bugs and performance issues in SMT solvers. We chose to include these instances because they enable us to isolate the performance of the solver on regex-heavy constraints in a way that the industrial benchmarks or instances obtained from other solver developers cannot.

\smallskip

\begin{figure}[t!]

\resizebox{.95\textwidth}{!}{\pgfplotsset{scaled x ticks=false}\pgfplotsset{scaled y ticks=false}\begin{tikzpicture}\begin{axis}[title=Stringfuzz RegEx Generated,xmin=-250,xlabel=Solved instances,ylabel=Time (seconds),,legend columns=2,legend style={nodes={scale=1, transform shape}, fill=none,anchor=east,align=center },axis line style={draw=none}, xtick pos=left, ytick pos=left, ymajorgrids=true, legend style={at={(1,0.9)},draw=none,fill=white},x post scale=2,y post scale=1.25,ymax=6500]
\addplot[name path=pathCVC4 , colourCVC4, line width=1.5pt,dashed] coordinates {(1,0.008)(14.5,0.18510714285714286)(42.5,0.6708571428571429)(70.5,1.29125)(98.5,2.0128214285714288)(126.5,2.8293928571428575)(154.5,3.7286785714285715)(182.5,4.71975)(210.5,5.7905)(238.5,6.923357142857143)(266.5,8.131357142857143)(294.5,9.419142857142857)(322.5,10.78132142857143)(350.5,12.203035714285713)(378.5,13.687357142857143)(406.5,15.23967857142857)(434.5,16.86225)(462.5,18.55314285714286)(490.5,20.31039285714286)(518.5,22.1445)(546.5,24.05125)(574.5,26.00782142857143)(602.5,28.031964285714285)(630.5,30.13875)(658.5,32.31482142857143)(686.5,34.54828571428572)(714.5,36.84782142857143)(742.5,39.23392857142857)(770.5,41.701)(798.5,44.247928571428574)(826.5,46.868)(854.5,49.55985714285715)(882.5,52.32192857142857)(910.5,55.16907142857143)(938.5,58.099642857142854)(966.5,61.10275)(994.5,64.17560714285715)(1022.5,67.32753571428572)(1050.5,70.57860714285715)(1078.5,73.90725)(1106.5,77.30139285714286)(1134.5,80.77125)(1162.5,84.32507142857143)(1190.5,87.963)(1218.5,91.68614285714285)(1246.5,95.51614285714285)(1274.5,99.46303571428571)(1302.5,103.52810714285714)(1330.5,107.70878571428571)(1358.5,111.99457142857143)(1386.5,116.41339285714285)(1414.5,120.96367857142856)(1442.5,125.66007142857144)(1470.5,130.50717857142857)(1498.5,135.5244642857143)(1526.5,140.71)(1554.5,146.0490357142857)(1582.5,151.56278571428572)(1610.5,157.242)(1638.5,163.0927142857143)(1666.5,169.11564285714286)(1694.5,175.29989285714288)(1722.5,181.66310714285714)(1750.5,188.27167857142857)(1778.5,195.1444642857143)(1806.5,202.3135714285714)(1834.5,209.76896428571428)(1862.5,217.46946428571428)(1890.5,225.4594285714286)(1918.5,233.82425)(1946.5,242.49282142857143)(1974.5,251.58585714285712)(2002.5,261.30925)(2030.5,271.56592857142857)(2058.5,282.3023928571428)(2086.5,293.60914285714284)(2114.5,305.59353571428574)(2142.5,318.38275)(2170.5,332.28975)(2198.5,347.4446785714286)(2226.5,363.69135714285716)(2254.5,380.97739285714283)(2282.5,399.40475)(2310.5,419.0940357142857)(2338.5,440.28085714285714)(2366.5,463.9591785714286)(2394.5,490.226)(2422.5,519.156)(2450.5,552.3558214285715)(2478.5,590.3926428571428)(2506.5,633.5361428571429)(2534.5,692.11425)(2562.5,781.3376428571429)(2590.5,912.2188214285715)(2618.5,1100.2021785714287)(2646.5,1358.4184642857142)(2674.5,1680.0195357142857)(2702.5,2062.139107142857)(2730.5,2507.012714285714)(2750.5,2869.938)(2758,2996.207)};
\addlegendentry{CVC4}

\addplot[name path=pathZ3Seq , colourZ3Seq, line width=1.5pt, densely dotted] coordinates {(1,0.02)(14.5,0.41446428571428573)(42.5,1.370857142857143)(70.5,2.4968214285714283)(98.5,3.7115)(126.5,5.015357142857143)(154.5,6.383964285714286)(182.5,7.815785714285715)(210.5,9.329107142857143)(238.5,10.909928571428571)(266.5,12.552535714285714)(294.5,14.25757142857143)(322.5,16.021285714285714)(350.5,17.8445)(378.5,19.72882142857143)(406.5,21.672964285714286)(434.5,23.668178571428573)(462.5,25.72189285714286)(490.5,27.84389285714286)(518.5,30.051071428571426)(546.5,32.33460714285714)(574.5,34.70128571428572)(602.5,37.14342857142857)(630.5,39.666428571428575)(658.5,42.269535714285716)(686.5,44.952642857142855)(714.5,47.72871428571428)(742.5,50.601642857142856)(770.5,53.56557142857143)(798.5,56.61125)(826.5,59.78639285714286)(854.5,63.05157142857143)(882.5,66.40435714285715)(910.5,69.8725)(938.5,73.45010714285715)(966.5,77.14639285714286)(994.5,80.95482142857144)(1022.5,84.8812142857143)(1050.5,88.94817857142857)(1078.5,93.1655)(1106.5,97.53060714285715)(1134.5,102.02985714285714)(1162.5,106.69532142857143)(1190.5,111.53189285714285)(1218.5,116.56175)(1246.5,121.76335714285715)(1274.5,127.14082142857143)(1302.5,132.6914642857143)(1330.5,138.41935714285714)(1358.5,144.38310714285714)(1386.5,150.65835714285714)(1414.5,157.27889285714286)(1442.5,164.25514285714286)(1470.5,171.67260714285712)(1498.5,179.57292857142858)(1526.5,188.02189285714286)(1554.5,197.112)(1582.5,206.9492857142857)(1610.5,217.56225)(1638.5,229.28389285714286)(1666.5,242.23957142857142)(1694.5,256.2975)(1722.5,271.7232142857143)(1750.5,288.6799642857143)(1778.5,307.4274285714286)(1806.5,327.983)(1834.5,350.83567857142856)(1862.5,376.1066071428572)(1890.5,404.39810714285716)(1918.5,436.3674285714286)(1946.5,472.32460714285713)(1974.5,512.0207857142857)(2002.5,556.1057142857143)(2030.5,605.1380714285715)(2058.5,660.0047857142857)(2086.5,721.0678571428572)(2114.5,788.6384642857143)(2142.5,865.6264642857143)(2170.5,954.14075)(2198.5,1052.4723214285714)(2226.5,1159.876642857143)(2254.5,1276.737392857143)(2282.5,1407.0914285714287)(2310.5,1550.0000714285713)(2338.5,1705.819107142857)(2366.5,1877.6885)(2394.5,2072.020571428571)(2422.5,2290.8108571428575)(2450.5,2534.534714285714)(2478.5,2802.9885357142857)(2506.5,3096.805428571429)(2534.5,3422.2965)(2562.5,3777.5361071428574)(2590.5,4159.514714285714)(2618.5,4576.67)(2646.5,5032.576857142857)(2674.5,5536.706321428572)(2692.5,5879.912875)(2698,5969.0)};
\addlegendentry{Z3Seq}

\addplot[name path=pathOSTRICH , colourOSTRICH, line width=1.5pt, dotted] coordinates {(1,0.916)(15,25.611758620689656)(44,89.13810344827586)(73,169.6659655172414)(102,257.64806896551727)(131,351.0540344827586)(160,450.0902068965517)(189,554.1508965517241)(218,663.7524827586207)(247,780.222)(276,902.884551724138)(305,1031.2963448275862)(334,1166.6751034482759)(363,1307.6414827586207)(392,1453.3730344827586)(421,1603.7649310344827)(450,1758.1052068965516)(479,1915.8070689655171)(508,2077.1165172413794)(537,2241.7791379310343)(566,2409.5376896551725)(595,2580.474482758621)(624,2753.826620689655)(653,2929.3597931034483)(682,3107.483655172414)(711,3287.855)(740,3470.3854137931035)(769,3654.9573103448274)(798,3841.743275862069)(827,4030.514482758621)(856,4221.702620689654)(885,4415.035793103448)(914,4610.404275862069)(943,4807.972)(972,5007.903931034483)(1001,5210.5672758620685)(1030,5415.673586206896)(1059,5622.770931034483)(1088,5831.630482758621)(1117,6042.692551724138)(1146,6256.317310344827)(1175,6472.665620689655)(1204,6690.955551724138)(1233,6911.284827586207)(1262,7133.647827586206)(1291,7357.902172413794)(1320,7584.247655172414)(1349,7812.630448275862)(1378,8043.305206896552)(1407,8275.847275862068)(1436,8510.69220689655)(1465,8747.747379310345)(1494,8987.266793103448)(1523,9229.187620689654)(1552,9474.28572413793)(1581,9722.91548275862)(1610,9974.162655172413)(1639,10227.644034482759)(1668,10483.188310344829)(1697,10741.189137931035)(1726,11001.320310344829)(1755,11264.116103448276)(1784,11529.910482758622)(1813,11798.687103448276)(1842,12070.358965517242)(1871,12345.242620689654)(1900,12623.162896551725)(1929,12904.812482758622)(1958,13190.415241379309)(1987,13479.517310344829)(2016,13771.906344827587)(2045,14068.32324137931)(2074,14369.143862068966)(2103,14674.322896551725)(2132,14984.12427586207)(2161,15300.156172413792)(2190,15622.289896551725)(2219,15950.549931034482)(2248,16283.756689655173)(2277,16621.76255172414)(2306,16966.473103448276)(2335,17317.183172413792)(2364,17674.096827586207)(2393,18039.55203448276)(2422,18411.8874137931)(2451,18791.26003448276)(2480,19177.92803448276)(2509,19572.955724137933)(2538,19975.895206896552)(2567,20389.73824137931)(2596,20812.907620689657)(2625,21247.049965517243)(2654,21692.771896551723)(2683,22152.172655172417)(2712,22626.567793103448)(2741,23121.08255172414)(2770,23638.54655172414)(2799,24181.74175862069)(2818,24552.308777777776)(2824,24651.865)};
\addlegendentry{OSTRICH}

\addplot[name path=pathZ3-Trau , colourZ3-Trau, line width=1.5pt,dash pattern={on 7pt off 2pt on 1pt off 3pt}] coordinates {(1,0.019)(13,0.38027999999999995)(38,1.29484)(63,2.42508)(88,3.6981599999999997)(113,5.101520000000001)(138,6.6317200000000005)(163,8.28896)(188,10.056280000000001)(213,12.00056)(238,14.168280000000001)(263,16.539080000000002)(288,19.14852)(313,22.01712)(338,25.22324)(363,28.89216)(388,33.037839999999996)(413,37.832080000000005)(438,43.28252)(463,49.46604)(488,56.34988)(513,63.948440000000005)(538,72.62344)(563,82.3686)(588,93.08664)(613,104.51248)(638,116.67775999999999)(663,129.82728)(688,143.9928)(713,158.91388)(738,174.61108)(763,191.15879999999999)(788,208.64644)(813,227.09784)(838,246.39632)(863,266.736)(888,288.12944)(913,310.73684000000003)(938,334.60752)(963,359.66564)(988,385.75696000000005)(1013,412.94867999999997)(1038,441.28592)(1063,470.96296)(1088,501.71528)(1113,533.57084)(1138,566.68188)(1163,601.1162800000001)(1188,636.71952)(1213,673.41412)(1238,711.09428)(1263,750.11412)(1288,790.464)(1313,832.43648)(1338,875.81268)(1363,920.6149200000001)(1388,966.70844)(1413,1013.8771999999999)(1438,1062.27944)(1463,1111.9296399999998)(1488,1163.13944)(1513,1215.90508)(1538,1270.0818000000002)(1563,1326.20164)(1588,1384.2681599999999)(1613,1445.29676)(1638,1508.97652)(1663,1574.4766000000002)(1688,1641.6694)(1713,1710.62016)(1738,1781.5320800000002)(1763,1854.70172)(1788,1929.79824)(1813,2007.05508)(1838,2086.74468)(1863,2168.78232)(1888,2253.49712)(1913,2341.91128)(1938,2433.7496)(1963,2528.7987200000002)(1988,2627.98412)(2013,2733.10776)(2038,2843.77536)(2063,2959.22208)(2088,3080.2014)(2113,3207.86984)(2138,3341.3264)(2163,3481.3294)(2188,3630.66228)(2213,3790.5126800000003)(2238,3967.24428)(2263,4166.1294)(2288,4388.42056)(2313,4637.8542)(2338,4917.10948)(2363,5249.878839999999)(2388,5658.87276)(2402.5,5926.561)(2406,5976.028)};
\addlegendentry{Z3-Trau}

\addplot[name path=pathZ3str3 , colourZ3str3, line width=1.5pt,dash dot] coordinates {(1,0.017)(17,0.37324242424242426)(50,1.1908181818181818)(83,2.1055757575757577)(116,3.140848484848485)(149,4.308606060606061)(182,5.631)(215,7.109606060606061)(248,8.742515151515152)(281,10.477818181818183)(314,12.294272727272729)(347,14.197030303030305)(380,16.185727272727274)(413,18.254272727272728)(446,20.402757575757576)(479,22.636454545454544)(512,24.95087878787879)(545,27.352454545454545)(578,29.832454545454546)(611,32.38781818181818)(644,35.03269696969697)(677,37.7620303030303)(710,40.57606060606061)(743,43.48139393939394)(776,46.48609090909091)(809,49.585454545454546)(842,52.77393939393939)(875,56.075818181818185)(908,59.49103030303031)(941,62.99612121212122)(974,66.6010303030303)(1007,70.32257575757576)(1040,74.18612121212122)(1073,78.21542424242423)(1106,82.40718181818183)(1139,86.73339393939393)(1172,91.20957575757576)(1205,95.83557575757575)(1238,100.6280303030303)(1271,105.61384848484848)(1304,110.82915151515152)(1337,116.25209090909091)(1370,121.87751515151515)(1403,127.78139393939394)(1436,134.01324242424243)(1469,140.53933333333333)(1502,147.47378787878787)(1535,154.78875757575756)(1568,162.64690909090908)(1601,171.15233333333333)(1634,180.30066666666664)(1667,190.17184848484848)(1700,200.75257575757576)(1733,212.11730303030305)(1766,224.24927272727274)(1799,237.42939393939395)(1832,251.89972727272726)(1865,267.7590606060606)(1898,285.157696969697)(1931,304.27660606060607)(1964,325.2841212121212)(1997,348.1041818181818)(2030,372.63203030303026)(2063,399.07972727272727)(2096,427.39090909090913)(2129,457.8640303030303)(2162,490.95057575757573)(2195,526.5923333333334)(2228,564.6906060606061)(2261,605.0688181818182)(2294,647.4693333333333)(2327,692.6966363636363)(2360,740.7006666666666)(2393,791.0806363636364)(2426,843.6021515151515)(2459,898.2147878787879)(2492,955.3053636363636)(2525,1014.7820909090909)(2558,1076.2961515151517)(2591,1140.2373333333333)(2624,1206.8754848484848)(2657,1275.9172424242424)(2690,1348.4895757575757)(2723,1424.394878787879)(2756,1503.1083333333333)(2789,1584.937)(2822,1670.1566666666668)(2855,1759.7423636363635)(2888,1855.502606060606)(2921,1957.4345151515151)(2954,2065.1422727272725)(2987,2180.4795757575757)(3020,2303.903484848485)(3053,2437.9494848484846)(3086,2588.140121212121)(3119,2753.9616666666666)(3152,2937.982575757576)(3185,3145.8734545454545)(3218,3396.6486060606057)(3246,3662.820086956522)(3259,3811.636)};
\addlegendentry{Z3str3}

\addplot[name path=pathZ3str3RE , colourZ3str3RE, line width=1.5pt] coordinates {(1,0.009)(21,0.2468780487804878)(62,0.7986341463414635)(103,1.4260975609756097)(144,2.104)(185,2.8200243902439026)(226,3.5693414634146343)(267,4.363609756097562)(308,5.2091463414634145)(349,6.094073170731707)(390,7.019365853658536)(431,7.990219512195122)(472,9.009731707317073)(513,10.079609756097561)(554,11.203756097560976)(595,12.382731707317072)(636,13.624682926829268)(677,14.947146341463414)(718,16.322341463414634)(759,17.75331707317073)(800,19.25031707317073)(841,20.820536585365854)(882,22.47580487804878)(923,24.208926829268293)(964,26.020975609756096)(1005,27.901121951219512)(1046,29.845219512195122)(1087,31.853560975609756)(1128,33.93146341463415)(1169,36.09334146341463)(1210,38.33126829268293)(1251,40.65360975609756)(1292,43.06868292682927)(1333,45.57290243902439)(1374,48.161195121951216)(1415,50.82124390243903)(1456,53.556634146341466)(1497,56.35573170731707)(1538,59.226975609756096)(1579,62.16873170731707)(1620,65.18770731707316)(1661,68.29358536585366)(1702,71.48717073170732)(1743,74.77575609756099)(1784,78.16526829268292)(1825,81.67663414634146)(1866,85.2999024390244)(1907,89.02651219512195)(1948,92.86243902439024)(1989,96.80456097560976)(2030,100.8620243902439)(2071,105.03187804878048)(2112,109.31990243902439)(2153,113.74356097560975)(2194,118.32643902439024)(2235,123.08478048780488)(2276,128.026756097561)(2317,133.21870731707315)(2358,138.6918536585366)(2399,144.37958536585364)(2440,150.34951219512197)(2481,156.7890731707317)(2522,163.82229268292684)(2563,171.7087317073171)(2604,180.66031707317075)(2645,190.99363414634146)(2686,203.14134146341462)(2727,217.288756097561)(2768,233.40143902439024)(2809,251.36670731707315)(2850,271.41053658536583)(2891,293.518756097561)(2932,317.642)(2973,344.1635365853658)(3014,372.9222195121951)(3055,404.1549756097561)(3096,438.50307317073174)(3137,476.1787804878049)(3178,516.8412682926829)(3219,560.6712439024391)(3260,607.1970731707316)(3301,656.7179756097561)(3342,709.9001707317074)(3383,766.982243902439)(3424,827.7085853658537)(3465,891.8164390243902)(3506,961.8189756097561)(3547,1038.0100243902439)(3588,1119.8172195121952)(3629,1207.8656097560977)(3670,1302.087243902439)(3711,1402.3393414634147)(3752,1509.160756097561)(3793,1623.7879268292681)(3834,1750.5318536585366)(3875,1892.1507317073172)(3916,2050.6085609756096)(3957,2232.1268780487803)(3998,2445.539926829268)(4039,2718.879975609756)(4061,2936.456)};
\addlegendentry{Z3str3RE}

\path[name path=axisCVC4] (axis cs:0,0) -- (axis cs:2758,0);
\addplot [thick,color=colourCVC4,fill=colourCVC4,fill opacity=0.1] fill between [of=pathCVC4 and axisCVC4];
\path[name path=axisZ3Seq] (axis cs:0,0) -- (axis cs:2698,0);
\addplot [thick,color=colourZ3Seq,fill=colourZ3Seq,fill opacity=0.1] fill between [of=pathZ3Seq and axisZ3Seq];
\path[name path=axisOSTRICH] (axis cs:0,0) -- (axis cs:2824,0);
\addplot [thick,color=colourOSTRICH,fill=colourOSTRICH,fill opacity=0.1] fill between [of=pathOSTRICH and axisOSTRICH];
\path[name path=axisZ3-Trau] (axis cs:0,0) -- (axis cs:2406,0);
\addplot [thick,color=colourZ3-Trau,fill=colourZ3-Trau,fill opacity=0.1] fill between [of=pathZ3-Trau and axisZ3-Trau];
\path[name path=axisZ3str3] (axis cs:0,0) -- (axis cs:3259,0);
\addplot [thick,color=colourZ3str3,fill=colourZ3str3,fill opacity=0.1] fill between [of=pathZ3str3 and axisZ3str3];
\path[name path=axisZ3str3RE] (axis cs:0,0) -- (axis cs:4061,0);
\addplot [thick,color=colourZ3str3RE,fill=colourZ3str3RE,fill opacity=0.1] fill between [of=pathZ3str3RE and axisZ3str3RE];
\end{axis}\end{tikzpicture}}

\caption{Cactus plot showing detailed results for the \benchmarkgen{} benchmark.}
\label{fig:cactus_generated}

\end{figure}

\begin{table}[t!]

\resizebox{0.95\textwidth}{!}{
\begin{tabular}{|c |c |c |c |c |c |c |}
\hline
&CVC4&Z3Seq&OSTRICH&Z3-Trau&Z3str3&Z3str3RE\\ 
  \hline\hline 
sat &2316&2001&2005&1590&3227&\textbf{3231}\\ 
 \hline
unsat &442&697&819&824&32&\textbf{830}\\ 
 \hline
\hline 
 unknown &\textbf{0}&\textbf{0}&1&192&\textbf{0}&\textbf{0}\\ 
 \hline
timeout &1412&1472&1345&1564&911&\textbf{109}\\ 
 \hline
soundness error &\textbf{0}&\textbf{0}&\textbf{0}&8&\textbf{0}&\textbf{0}\\ 
 \hline
program crashes &\textbf{0}&\textbf{0}&\textbf{0}&192&\textbf{0}&\textbf{0}\\ 
 \hline
\hline 
 Total correct &2758&2698&2824&2406&3259&\textbf{4061}\\ 
 \hline
Contribution score &0.0&\textbf{3.17}&2.0&--&0.0&0.17\\ 
\hline
Time (s) &31236.207&35409.000&51571.800&37323.550&22031.636&\textbf{5116.456}\\ 
 \hline
Time w/o timeouts (s) &2996.207&5969.000&24671.800&6043.550&3811.636&\textbf{2936.456}\\ 
 \hline
\end{tabular}}
\vspace{0.1cm}
\caption{Detailed results for the \benchmarkgen{} benchmark. \toolname{} has the biggest lead with a score of 1.25.}
\label{tab:cactus_generated}

\end{table}

\noindent \textbf{\benchmarkxform{}} is a set of 10682 instances which were produced by transforming existing industrial instances with StringFuzz.
To create the \textbf{\benchmarkxform{}} benchmark, we applied StringFuzz's transformers to instances supplied by Amazon Web Services related to security policy validation, handcrafted instances which are inspired by real-world input validation vulnerabilities, and the regex test cases
included in Z3str3's regression test suite.
The instances in this suite include regex constraints,
arithmetic constraints on string length, string-number conversion
($numstr$), 
string concatenation, word equations, and other high-level string operations such as \texttt{charAt}, \texttt{indexof}, and \texttt{substr}. 
As is typical for fuzzing in software testing, the goal is to create a
suite of tests from a given input that are similar in structure but
that explore interesting behaviour not captured by a ``typical''
industrial instance. These transformed instances are often harder than the original industrial ones. 

\begin{figure}[t!]

\resizebox{.95\textwidth}{!}{\pgfplotsset{scaled x ticks=false}\pgfplotsset{scaled y ticks=false}\begin{tikzpicture}\begin{axis}[title=Stringfuzz RegEx Transformed,xmin=-250,xlabel=Solved instances,ylabel=Time (seconds),,legend columns=2,legend style={nodes={scale=1, transform shape}, fill=none,anchor=east,align=center },axis line style={draw=none}, xtick pos=left, ytick pos=left, ymajorgrids=true, legend style={draw=none,at={(1.05,0.95)},fill=white},x post scale=2,y post scale=1.25,ymax=1000]
\addplot[name path=pathCVC4 , colourCVC4, line width=1.5pt,dashed] coordinates {(1,0.007)(54.5,0.5303611111111111)(162.5,1.6835)(270.5,2.9210925925925926)(378.5,4.217)(486.5,5.538694444444444)(594.5,6.9375)(702.5,8.3415)(810.5,9.783416666666666)(918.5,11.294)(1026.5,12.806)(1134.5,14.3715)(1242.5,15.9915)(1350.5,17.6115)(1458.5,19.246805555555554)(1566.5,20.963)(1674.5,22.691)(1782.5,24.421555555555553)(1890.5,26.2245)(1998.5,28.0605)(2106.5,29.8965)(2214.5,31.77384259259259)(2322.5,33.717)(2430.5,35.661)(2538.5,37.61375925925926)(2646.5,39.6465)(2754.5,41.6985)(2862.5,43.756009259259265)(2970.5,45.891)(3078.5,48.051)(3186.5,50.211)(3294.5,52.41411111111111)(3402.5,54.6815)(3510.5,56.9495)(3618.5,59.259722222222216)(3726.5,61.635)(3834.5,64.011)(3942.5,66.41200925925925)(4050.5,68.8905)(4158.5,71.3745)(4266.5,73.87600925925925)(4374.5,76.458)(4482.5,79.05)(4590.5,81.66500925925925)(4698.5,84.3585)(4806.5,87.05922222222222)(4914.5,89.825)(5022.5,92.633)(5130.5,95.44716666666667)(5238.5,98.3395)(5346.5,101.2555)(5454.5,104.19583333333333)(5562.5,107.214)(5670.5,110.2493425925926)(5778.5,113.3655)(5886.5,116.4975925925926)(5994.5,119.688)(6102.5,122.928)(6210.5,126.17077777777779)(6318.5,129.4865)(6426.5,132.83491666666666)(6534.5,136.246)(6642.5,139.70209259259258)(6750.5,143.2165)(6858.5,146.7805)(6966.5,150.39505555555556)(7074.5,154.07350925925928)(7182.5,157.8305)(7290.5,161.6303611111111)(7398.5,165.51)(7506.5,169.44663888888888)(7614.5,173.44833333333335)(7722.5,177.52861111111113)(7830.5,181.6975)(7938.5,185.93875925925926)(8046.5,190.27547222222222)(8154.5,194.70638888888888)(8262.5,199.2295)(8370.5,203.84544444444444)(8478.5,208.56261111111112)(8586.5,213.3804722222222)(8694.5,218.31675925925927)(8802.5,223.40130555555555)(8910.5,228.6172222222222)(9018.5,233.99158333333335)(9126.5,239.56277777777777)(9234.5,245.3262777777778)(9342.5,251.2915648148148)(9450.5,257.4951111111111)(9558.5,264.0128055555556)(9666.5,270.9046666666667)(9774.5,278.22501851851854)(9882.5,286.0976296296296)(9990.5,294.6800185185185)(10098.5,304.4574259259259)(10206.5,316.24451851851853)(10314.5,331.00245370370374)(10422.5,352.82024074074076)(10516,389.69560759493675)(10557,469.643)};
\addlegendentry{CVC4}

\addplot[name path=pathZ3Seq , colourZ3Seq, line width=1.5pt, densely dotted] coordinates {(1,0.017)(54.5,1.123398148148148)(162.5,3.464888888888889)(270.5,5.9225)(378.5,8.433592592592593)(486.5,11.021)(594.5,13.634722222222223)(702.5,16.3275)(810.5,19.04125)(918.5,21.836)(1026.5,24.644)(1134.5,27.50063888888889)(1242.5,30.4165)(1350.5,33.34009259259259)(1458.5,36.343)(1566.5,39.367)(1674.5,42.409666666666666)(1782.5,45.5325)(1890.5,48.6645)(1998.5,51.81575925925926)(2106.5,55.047)(2214.5,58.287)(2322.5,61.54508333333334)(2430.5,64.8835)(2538.5,68.2315)(2646.5,71.6208425925926)(2754.5,75.076)(2862.5,78.532)(2970.5,82.03200925925925)(3078.5,85.5955)(3186.5,89.1595)(3294.5,92.76841666666667)(3402.5,96.44)(3510.5,100.112)(3618.5,103.8038611111111)(3726.5,107.5755)(3834.5,111.3555)(3942.5,115.15597222222222)(4050.5,119.036)(4158.5,122.92400925925925)(4266.5,126.8675)(4374.5,130.8635)(4482.5,134.85963888888887)(4590.5,138.915)(4698.5,143.019)(4806.5,147.14347222222221)(4914.5,151.3475)(5022.5,155.5595)(5130.5,159.81372222222222)(5238.5,164.133)(5346.5,168.45950925925928)(5454.5,172.8645)(5562.5,177.2925)(5670.5,181.76361111111112)(5778.5,186.299)(5886.5,190.85425925925927)(5994.5,195.4895)(6102.5,200.14036111111113)(6210.5,204.87)(6318.5,209.62602777777778)(6426.5,214.4575)(6534.5,219.3183425925926)(6642.5,224.245)(6750.5,229.213)(6858.5,234.2241111111111)(6966.5,239.30275)(7074.5,244.456)(7182.5,249.64916666666664)(7290.5,254.9225)(7398.5,260.2515833333333)(7506.5,265.6525555555556)(7614.5,271.1275)(7722.5,276.6578611111111)(7830.5,282.2715925925926)(7938.5,287.9685)(8046.5,293.747)(8154.5,299.62122222222223)(8262.5,305.59355555555555)(8370.5,311.66200925925926)(8478.5,317.83197222222225)(8586.5,324.1066666666667)(8694.5,330.4918611111111)(8802.5,336.99280555555555)(8910.5,343.61651851851855)(9018.5,350.41122222222225)(9126.5,357.3658148148148)(9234.5,364.50728703703703)(9342.5,371.844787037037)(9450.5,379.40802777777776)(9558.5,387.22408333333334)(9666.5,395.33983333333333)(9774.5,403.80940740740743)(9882.5,412.7536203703704)(9990.5,422.26373148148144)(10098.5,432.63098148148146)(10206.5,444.28489814814816)(10314.5,457.76581481481486)(10422.5,475.7292037037037)(10530.5,508.08969444444443)(10596,562.2875217391305)(10609,606.935)};
\addlegendentry{Z3Seq}

\addplot[name path=pathOSTRICH , colourOSTRICH, line width=1.5pt, dotted] coordinates {(1,0.741)(43.5,44.94746511627907)(129.5,146.9616511627907)(215.5,257.48620930232556)(301.5,372.4051976744186)(387.5,490.67495348837207)(473.5,611.8609883720931)(559.5,735.7402325581395)(645.5,862.2752674418605)(731.5,991.6876627906977)(817.5,1123.6380697674417)(903.5,1257.9013372093023)(989.5,1394.8169302325582)(1075.5,1533.942430232558)(1161.5,1674.9191162790698)(1247.5,1817.5090813953489)(1333.5,1961.8142325581396)(1419.5,2107.771930232558)(1505.5,2255.2819418604654)(1591.5,2404.4249534883725)(1677.5,2555.0977558139534)(1763.5,2707.2265581395345)(1849.5,2860.8377441860466)(1935.5,3015.9242558139536)(2021.5,3172.2990697674422)(2107.5,3330.130639534884)(2193.5,3489.6160465116277)(2279.5,3650.886209302326)(2365.5,3813.741372093023)(2451.5,3978.5878953488373)(2537.5,4145.205290697674)(2623.5,4313.637523255814)(2709.5,4484.107325581395)(2795.5,4656.646069767442)(2881.5,4831.097360465116)(2967.5,5007.5295)(3053.5,5185.771406976744)(3139.5,5365.600069767442)(3225.5,5547.0971744186045)(3311.5,5730.2754651162795)(3397.5,5914.922755813954)(3483.5,6100.775872093022)(3569.5,6287.998837209302)(3655.5,6476.696093023255)(3741.5,6666.811267441861)(3827.5,6858.177639534884)(3913.5,7050.711453488372)(3999.5,7244.355639534884)(4085.5,7439.031906976745)(4171.5,7634.867848837209)(4257.5,7831.893058139534)(4343.5,8029.896116279069)(4429.5,8228.874941860466)(4515.5,8428.930616279069)(4601.5,8630.054616279069)(4687.5,8832.318267441862)(4773.5,9035.793906976745)(4859.5,9240.507569767442)(4945.5,9446.523604651162)(5031.5,9653.765441860465)(5117.5,9862.005267441862)(5203.5,10071.376511627906)(5289.5,10282.060302325583)(5375.5,10493.92)(5461.5,10706.976941860465)(5547.5,10921.303267441861)(5633.5,11136.852674418604)(5719.5,11353.69668604651)(5805.5,11572.019627906977)(5891.5,11791.764674418604)(5977.5,12012.935313953489)(6063.5,12235.660453488372)(6149.5,12460.02308139535)(6235.5,12686.00434883721)(6321.5,12913.830569767442)(6407.5,13143.537906976744)(6493.5,13375.497883720931)(6579.5,13609.57084883721)(6665.5,13846.043569767442)(6751.5,14084.963825581395)(6837.5,14326.703569767442)(6923.5,14571.299476744187)(7009.5,14819.031441860465)(7095.5,15070.219244186046)(7181.5,15325.323418604652)(7267.5,15584.147604651162)(7353.5,15847.513360465116)(7439.5,16116.183058139535)(7525.5,16390.33041860465)(7611.5,16670.45136046512)(7697.5,16958.06230232558)(7783.5,17254.16811627907)(7869.5,17559.92151162791)(7955.5,17877.303081395352)(8041.5,18206.947360465118)(8127.5,18554.478593023254)(8213.5,18924.77562790698)(8299.5,19321.39338372093)(8385.5,19777.598372093023)(8435,20106.05769230769)(8443,20174.836)};
\addlegendentry{OSTRICH}

\addplot[name path=pathZ3-Trau , colourZ3-Trau, line width=1.5pt,dash pattern={on 7pt off 2pt on 1pt off 3pt}] coordinates {(1,0.02)(45.5,1.0565444444444445)(135.5,3.2051666666666665)(225.5,5.435666666666667)(315.5,7.736)(405.5,10.080511111111111)(495.5,12.4895)(585.5,14.921011111111111)(675.5,17.411)(765.5,19.931)(855.5,22.4734)(945.5,25.0795)(1035.5,27.6895)(1125.5,30.333733333333335)(1215.5,33.033)(1305.5,35.733)(1395.5,38.44553333333333)(1485.5,41.2255)(1575.5,44.0155)(1665.5,46.80561111111111)(1755.5,49.645)(1845.5,52.525)(1935.5,55.405)(2025.5,58.299166666666665)(2115.5,61.2605)(2205.5,64.2305)(2295.5,67.2024)(2385.5,70.234)(2475.5,73.294)(2565.5,76.35681111111111)(2655.5,79.4815)(2745.5,82.6315)(2835.5,85.7889)(2925.5,89.013)(3015.5,92.253)(3105.5,95.49401111111112)(3195.5,98.7915)(3285.5,102.1215)(3375.5,105.4585)(3465.5,108.862)(3555.5,112.282)(3645.5,115.72166666666668)(3735.5,119.2265)(3825.5,122.7365)(3915.5,126.25701111111111)(4005.5,129.845)(4095.5,133.445)(4185.5,137.0640111111111)(4275.5,140.7485)(4365.5,144.4385)(4455.5,148.1577)(4545.5,151.936)(4635.5,155.7215111111111)(4725.5,159.5725)(4815.5,163.4425)(4905.5,167.34501111111112)(4995.5,171.304)(5085.5,175.28236666666666)(5175.5,179.3265)(5265.5,183.37820000000002)(5355.5,187.489)(5445.5,191.629)(5535.5,195.79283333333333)(5625.5,200.0195)(5715.5,204.25573333333332)(5805.5,208.558)(5895.5,212.8816111111111)(5985.5,217.2685)(6075.5,221.68853333333334)(6165.5,226.176)(6255.5,230.68466666666666)(6345.5,235.2605)(6435.5,239.87361111111113)(6525.5,244.55)(6615.5,249.25456666666668)(6705.5,254.0270111111111)(6795.5,258.868)(6885.5,263.75561111111114)(6975.5,268.7080111111111)(7065.5,273.727)(7155.5,278.8105111111111)(7245.5,283.9643333333333)(7335.5,289.1865111111111)(7425.5,294.47773333333333)(7515.5,299.839)(7605.5,305.2828666666666)(7695.5,310.83231111111115)(7785.5,316.4719)(7875.5,322.2251111111111)(7965.5,328.08116666666666)(8055.5,334.0618444444445)(8145.5,340.1675333333333)(8235.5,346.4065111111111)(8325.5,352.8138333333333)(8415.5,359.43440000000004)(8505.5,366.3056666666667)(8595.5,373.5592888888889)(8685.5,381.5135777777778)(8754,389.2892340425532)(8779,400.461)};
\addlegendentry{Z3-Trau}

\addplot[name path=pathZ3str3 , colourZ3str3, line width=1.5pt,dash dot] coordinates {(1,0.012)(47.5,0.7518617021276596)(141.5,2.4066063829787234)(235.5,4.1785)(329.5,5.995627659574469)(423.5,7.874)(517.5,9.754)(611.5,11.656127659574468)(705.5,13.6255)(799.5,15.5995)(893.5,17.5735)(987.5,19.554202127659572)(1081.5,21.604)(1175.5,23.672)(1269.5,25.74)(1363.5,27.808)(1457.5,29.876)(1551.5,31.96477659574468)(1645.5,34.1215)(1739.5,36.2835)(1833.5,38.4455)(1927.5,40.6075)(2021.5,42.76953191489362)(2115.5,44.981)(2209.5,47.237)(2303.5,49.493)(2397.5,51.749)(2491.5,54.005)(2585.5,56.263936170212766)(2679.5,58.5875)(2773.5,60.9375)(2867.5,63.2875)(2961.5,65.6375)(3055.5,67.99244680851064)(3149.5,70.415)(3243.5,72.859)(3337.5,75.303)(3431.5,77.76003191489362)(3525.5,80.2875)(3619.5,82.8255)(3713.5,85.3635)(3807.5,87.94603191489361)(3901.5,90.578)(3995.5,93.21127659574468)(4089.5,95.9045)(4183.5,98.6305)(4277.5,101.38369148936171)(4371.5,104.201)(4465.5,107.0288829787234)(4559.5,109.9265)(4653.5,112.8405)(4747.5,115.802)(4841.5,118.81)(4935.5,121.83877659574468)(5029.5,124.9355)(5123.5,128.04123404255319)(5217.5,131.213)(5311.5,134.4121914893617)(5405.5,137.6765)(5499.5,140.96794680851065)(5593.5,144.32)(5687.5,147.7040106382979)(5781.5,151.1365)(5875.5,154.6176914893617)(5969.5,158.164)(6063.5,161.73945744680853)(6157.5,165.3805)(6251.5,169.05282978723406)(6345.5,172.794)(6439.5,176.5577340425532)(6533.5,180.3875)(6627.5,184.2561595744681)(6721.5,188.195)(6815.5,192.17832978723405)(6909.5,196.21979787234042)(7003.5,200.316)(7097.5,204.46556382978721)(7191.5,208.6855638297872)(7285.5,212.966)(7379.5,217.3025106382979)(7473.5,221.7095)(7567.5,226.1691595744681)(7661.5,230.71131914893616)(7755.5,235.3362765957447)(7849.5,240.04797872340427)(7943.5,244.85527659574467)(8037.5,249.7543829787234)(8131.5,254.7534574468085)(8225.5,259.8639787234043)(8319.5,265.07851063829787)(8413.5,270.41279787234043)(8507.5,275.88305319148935)(8601.5,281.50894680851064)(8695.5,287.32851063829787)(8789.5,293.42238297872336)(8883.5,299.8523404255319)(8977.5,306.7179042553191)(9071.5,314.4735106382979)(9165.5,375.96380851063833)(9222.5,698.37135)(9234,828.245)};
\addlegendentry{Z3str3}

\addplot[name path=pathZ3str3RE , colourZ3str3RE, line width=1.5pt] coordinates {(1,0.008)(55,0.612302752293578)(164,1.9329908256880732)(273,3.352798165137615)(382,4.856)(491,6.382192660550459)(600,7.969)(709,9.604)(818,11.241119266055046)(927,12.95)(1036,14.694)(1145,16.438)(1254,18.2274128440367)(1363,20.08)(1472,21.933)(1581,23.797238532110093)(1690,25.743)(1799,27.705)(1908,29.667)(2017,31.64867889908257)(2126,33.711)(2235,35.782)(2344,37.853)(2453,39.964963302752295)(2562,42.144)(2671,44.324)(2780,46.504)(2889,48.70308256880734)(2998,50.983)(3107,53.272)(3216,55.561)(3325,57.85545871559633)(3434,60.228)(3543,62.626)(3652,65.024)(3761,67.42432110091742)(3870,69.897)(3979,72.404)(4088,74.911)(4197,77.4258990825688)(4306,80.021)(4415,82.637)(4524,85.253)(4633,87.8735504587156)(4742,90.571)(4851,93.296)(4960,96.021)(5069,98.74756880733946)(5178,101.544)(5287,104.378)(5396,107.212)(5505,110.04619266055046)(5614,112.941)(5723,115.884)(5832,118.827)(5941,121.77298165137614)(6050,124.793)(6159,127.845)(6268,130.897)(6377,133.97115596330278)(6486,137.125)(6595,140.286)(6704,143.4474128440367)(6813,146.672)(6922,149.942)(7031,153.212)(7140,156.53209174311925)(7249,159.911)(7358,163.29)(7467,166.71171559633027)(7576,170.199)(7685,173.687)(7794,177.229)(7903,180.826)(8012,184.4337889908257)(8121,188.123)(8230,191.82960550458714)(8339,195.601)(8448,199.416)(8557,203.26774311926607)(8666,207.19)(8775,211.168)(8884,215.20240366972476)(8993,219.306)(9102,223.45923853211008)(9211,227.694)(9320,231.97045871559632)(9429,236.33014678899082)(9538,240.77699082568805)(9647,245.326)(9756,249.94941284403671)(9865,254.6914128440367)(9974,259.5555504587156)(10083,264.55191743119263)(10192,269.7016422018349)(10301,275.0770825688073)(10410,280.72149541284404)(10519,286.8079633027523)(10604,292.33534426229505)(10636,295.067)};
\addlegendentry{Z3str3RE}

\path[name path=axisCVC4] (axis cs:0,0) -- (axis cs:10557,0);
\addplot [thick,color=colourCVC4,fill=colourCVC4,fill opacity=0.1] fill between [of=pathCVC4 and axisCVC4];
\path[name path=axisZ3Seq] (axis cs:0,0) -- (axis cs:10609,0);
\addplot [thick,color=colourZ3Seq,fill=colourZ3Seq,fill opacity=0.1] fill between [of=pathZ3Seq and axisZ3Seq];
\path[name path=axisOSTRICH] (axis cs:0,0) -- (axis cs:8443,0);
\addplot [thick,color=colourOSTRICH,fill=colourOSTRICH,fill opacity=0.1] fill between [of=pathOSTRICH and axisOSTRICH];
\path[name path=axisZ3-Trau] (axis cs:0,0) -- (axis cs:8779,0);
\addplot [thick,color=colourZ3-Trau,fill=colourZ3-Trau,fill opacity=0.1] fill between [of=pathZ3-Trau and axisZ3-Trau];
\path[name path=axisZ3str3] (axis cs:0,0) -- (axis cs:9234,0);
\addplot [thick,color=colourZ3str3,fill=colourZ3str3,fill opacity=0.1] fill between [of=pathZ3str3 and axisZ3str3];
\path[name path=axisZ3str3RE] (axis cs:0,0) -- (axis cs:10636,0);
\addplot [thick,color=colourZ3str3RE,fill=colourZ3str3RE,fill opacity=0.1] fill between [of=pathZ3str3RE and axisZ3str3RE];
\end{axis}\end{tikzpicture}}

\caption{Cactus plot showing detailed results for the \benchmarkxform{} benchmark.}
\label{fig:cactus_transformed}

\end{figure}
\begin{table}[t!]

\resizebox{0.95\textwidth}{!}{
\begin{tabular}{|c |c |c |c |c |c |c |}
\hline
&CVC4&Z3Seq&OSTRICH&Z3-Trau&Z3str3&Z3str3RE\\ 
  \hline\hline 
sat &4541&\textbf{4633}&3899&3672&4417&4599\\ 
 \hline
unsat &6016&5976&4549&\textbf{6282}&4817&6037\\ 
 \hline
\hline 
 unknown &\textbf{0}&\textbf{0}&2233&721&\textbf{0}&6\\ 
 \hline
timeout &125&73&\textbf{1}&7&1448&40\\ 
 \hline
soundness error &\textbf{0}&\textbf{0}&5&1241&\textbf{0}&\textbf{0}\\ 
 \hline
program crashes &\textbf{0}&\textbf{0}&\textbf{0}&718&\textbf{0}&\textbf{0}\\ 
 \hline
\hline 
 Total correct &10557&10609&8443&8713&9234&\textbf{10636}\\ 
 \hline
Contribution score &0.5&0.0&--&--&0.0&\textbf{4.83}\\ 
\hline
Time (s) &2969.643&2066.935&23094.737&\textbf{722.545}&29788.245&1095.209\\ 
 \hline
Time w/o timeouts (s) &469.643&606.935&23074.737&582.545&828.245&\textbf{295.209}\\ 
 \hline
\end{tabular}}
\vspace{0.1cm}
\caption{Detailed results for the \benchmarkxform{} benchmark. \toolname{} has the biggest lead with a score of 1.0.}
\label{tab:cactus_transformed}

\end{table}

\subsection{Comparison and Scoring Methods}

We compare solvers directly against the total number of correctly solved cases, total time with and without timeouts, and total number of soundness errors and program crashes.
We also computed the biggest lead winner and largest contribution ranking following the scoring system used by the SMT Competition~\cite{smtcomprules2020}. Briefly, the biggest lead measures the proportion of correct answers of the leading tool to correct answers of the next ranking tool, and the contribution score measures what proportion of instances were solved the fastest by that solver. In accordance with the SMT Competition guidelines, a solver receives no contribution score (denoted as --) if it produces any incorrect answers on a given benchmark. In both cases, higher scores are better.

\subsection{Analysis of Empirical Results}

The cactus plot in Figure \ref{fig:catctus_total} shows the cumulative time taken by each solver on all cases in increasing order of runtime. Solvers that are further to the right and closer to the bottom of the plot have better performance.

Overall \toolname{} solves more instances and performs better than all competing solvers. Across all benchmarks, \toolname{} is over 2.4x faster than CVC4, 4.4x faster than Z3seq, 6.4x faster than Z3-Trau, 9.1x faster than Z3str3, and 13x faster than OSTRICH
(including timeouts). Additionally, \toolname{} has fewer combined timeouts and unknowns than other tools considered, and no soundness errors or crashes.  We summarize these results in Table \ref{tab:tableTotal}.
Notably, both Z3-Trau~\cite{z3-trau} and OSTRICH~\cite{ostrich} had significant runtime issues in our experiments. Z3-Trau produced 5325 soundness errors and 2477 crashes on our benchmarks (13\% of all instances), which is significantly higher than other tools used. OSTRICH produced 10901 ``unknown'' responses on the benchmarks (19\% of all instances), due to both unsupported features and crashes, and also produced 28 soundness errors.
Over all benchmarks, Z3str3RE produced 291 unknowns. There are several potential reasons for this; the solver may have encountered a resource limit and returned UNKNOWN, or Z3str3 may have detected non-termination and returned UNKNOWN instead of looping forever.
According to SMT Competition scoring, \toolname{} won the division across all benchmarks with a lead of 1.02, and had the largest contribution to the division with a score of 145.07. CVC4 had a contribution score of 95.99, and Z3seq had a score of 19.87. OSTRICH, Z3-Trau, and Z3str3 received no contribution score as they each returned at least one incorrect answer.
The presented results are typical of the performance of the evaluated tools over multiple runs. Results were cross-validated within runs and between multiple runs. For a random single instance, the sample variance in execution time for 100 runs is 0.001 (0.07\% of average execution time). Over \totalinstances{} instances, this is negligible.

The empirical results make clear the
efficacy of length-aware automata-based techniques for regular
expression constraints when accompanied with length constraints (which
is typical for industrial instances). The effectiveness of our technique is demonstrated particularly by comparing \toolname{} with Z3str3, as the only differences between these tools are the length-aware regex algorithm and heuristics implemented in \toolname{} and bug fixes.
By improving the regex algorithm and applying our heuristics,
we achieved a speedup of over 9x and solved over 10000 more cases than Z3str3.

\begin{figure}[t!]

\resizebox{.95\textwidth}{!}{\pgfplotsset{scaled x ticks=false}\pgfplotsset{scaled y ticks=false}\begin{tikzpicture}\begin{axis}[title=RegEx Collected,xmin=-1000,xlabel=Solved instances,ylabel=Time (seconds),,legend columns=2,legend style={nodes={scale=1, transform shape}, fill=none,anchor=east,align=center },axis line style={draw=none}, xtick pos=left, ytick pos=left, ymajorgrids=true, legend style={draw=none,fill=white},x post scale=2,y post scale=1.25,ymax=12500]
\addplot[name path=pathCVC4 , colourCVC4, line width=1.5pt,dashed] coordinates {(1,0.009)(114,1.6043348017621144)(341,5.333255506607929)(568,9.490105726872248)(795,13.936559471365639)(1022,18.63684140969163)(1249,23.56359911894273)(1476,28.687)(1703,34.00011453744494)(1930,39.50094713656387)(2157,45.19337444933921)(2384,51.06932599118943)(2611,57.115)(2838,63.31030396475771)(3065,69.67125991189427)(3292,76.201)(3519,82.84074008810572)(3746,89.64639647577091)(3973,96.614)(4200,103.6783832599119)(4427,110.913)(4654,118.25777533039647)(4881,125.746)(5108,133.35)(5335,141.07276211453745)(5562,148.946)(5789,156.9169295154185)(6016,165.058)(6243,173.26808810572686)(6470,181.647)(6697,190.11618061674008)(6924,198.7372422907489)(7151,207.487)(7378,216.35222466960352)(7605,225.381)(7832,234.49140969162997)(8059,243.772)(8286,253.15316740088107)(8513,262.6830440528634)(8740,272.335)(8967,282.11484581497797)(9194,292.063)(9421,302.1219691629956)(9648,312.3322907488987)(9875,322.672)(10102,333.1658986784141)(10329,343.83555506607934)(10556,354.681)(10783,365.6479691629956)(11010,376.78286343612336)(11237,388.0900660792952)(11464,399.559)(11691,411.2007841409692)(11918,423.05125991189425)(12145,435.10514096916296)(12372,447.3725154185022)(12599,459.87835682819383)(12826,472.60969162995593)(13053,485.55589427312776)(13280,498.74162995594713)(13507,512.1870704845816)(13734,525.8877885462556)(13961,539.8567400881058)(14188,554.1396696035242)(14415,568.7316563876652)(14642,583.6663744493392)(14869,598.9352643171807)(15096,614.5943436123348)(15323,630.6912026431718)(15550,647.2369162995595)(15777,664.2850881057269)(16004,681.9400220264317)(16231,700.1907356828194)(16458,719.0646123348018)(16685,738.6549251101322)(16912,758.9736607929516)(17139,780.1746255506607)(17366,802.3928414096916)(17593,825.9486828193832)(17820,851.1296343612335)(18047,878.1704449339207)(18274,907.5060748898678)(18501,939.7701894273129)(18728,974.897691629956)(18955,1012.9952466960352)(19182,1054.229744493392)(19409,1099.5227092511013)(19636,1150.8639339207048)(19863,1211.4973876651982)(20090,1283.5564185022026)(20317,1383.7662863436121)(20544,1517.0738898678414)(20771,1666.339101321586)(20998,1829.17595154185)(21225,2032.4169603524228)(21452,2861.2594361233478)(21679,4307.138356828194)(21906,6183.727528634361)(22115,8934.336246073299)(22212,10350.224)};
\addlegendentry{CVC4}

\addplot[name path=pathZ3Seq , colourZ3Seq, line width=1.5pt, densely dotted] coordinates {(1,0.017)(103.5,2.264344660194175)(309.5,7.088412621359224)(515.5,12.209941747572815)(721.5,17.55560194174757)(927.5,23.092679611650485)(1133.5,28.795)(1339.5,34.63355825242718)(1545.5,40.62757281553398)(1751.5,46.779)(1957.5,53.03633495145631)(2163.5,59.42116019417476)(2369.5,65.926)(2575.5,72.52659223300971)(2781.5,79.2725)(2987.5,86.0903786407767)(3193.5,93.062)(3399.5,100.10861165048544)(3605.5,107.3055)(3811.5,114.55495631067961)(4017.5,121.956)(4223.5,129.38620388349514)(4429.5,136.9675)(4635.5,144.59273300970872)(4841.5,152.351)(5047.5,160.179)(5253.5,168.07673300970873)(5459.5,176.1075)(5665.5,184.17276699029128)(5871.5,192.392)(6077.5,200.6365922330097)(6283.5,209.0185)(6489.5,217.4645)(6695.5,225.97147572815535)(6901.5,234.618)(7107.5,243.28733009708736)(7313.5,252.1095)(7519.5,260.9675)(7725.5,269.9154417475728)(7931.5,278.979)(8137.5,288.1001893203884)(8343.5,297.3635)(8549.5,306.6469708737864)(8755.5,316.081)(8961.5,325.557)(9167.5,335.1238786407767)(9373.5,344.8055)(9579.5,354.53678640776695)(9785.5,364.415)(9991.5,374.34001941747573)(10197.5,384.4175)(10403.5,394.54500970873784)(10609.5,404.826)(10815.5,415.15671844660193)(11021.5,425.6415)(11227.5,436.1682669902913)(11433.5,446.849)(11639.5,457.59983980582524)(11845.5,468.5025)(12051.5,479.4885970873786)(12257.5,490.6091019417476)(12463.5,501.8425)(12669.5,513.1906601941747)(12875.5,524.692)(13081.5,536.3062038834952)(13287.5,548.0526893203884)(13493.5,559.942)(13699.5,571.9464466019417)(13905.5,584.1039126213592)(14111.5,596.4162669902913)(14317.5,608.8895)(14523.5,621.5040970873786)(14729.5,634.2845825242719)(14935.5,647.2427038834952)(15141.5,660.3973883495146)(15347.5,673.7641019417476)(15553.5,687.3662427184466)(15759.5,701.2004805825243)(15965.5,715.2624805825243)(16171.5,729.6037669902913)(16377.5,744.2724514563107)(16583.5,759.2605)(16789.5,774.5760533980583)(16995.5,790.2594271844661)(17201.5,806.3397524271845)(17407.5,822.8788495145631)(17613.5,839.9413155339806)(17819.5,857.7020194174756)(18025.5,876.4284805825243)(18231.5,896.4705097087378)(18437.5,918.8223737864078)(18643.5,946.6145776699029)(18849.5,981.0990825242718)(19055.5,1019.8828980582524)(19261.5,1062.3955436893202)(19467.5,1109.472572815534)(19673.5,1165.2795145631067)(19879.5,1252.756572815534)(20070.5,1486.6743920454546)(20160,1993.484)};
\addlegendentry{Z3Seq}

\addplot[name path=pathOSTRICH , colourOSTRICH, line width=1.5pt, dotted] coordinates {(1,0.963)(70,95.95328057553957)(209,317.5416330935252)(348,558.2403669064748)(487,807.0735323741008)(626,1061.7848273381296)(765,1320.9988705035971)(904,1583.9980719424461)(1043,1850.313107913669)(1182,2119.4639496402874)(1321,2391.081309352518)(1460,2665.132676258993)(1599,2941.562143884892)(1738,3220.0257913669066)(1877,3500.438654676259)(2016,3782.8274604316543)(2155,4066.871762589928)(2294,4352.527611510792)(2433,4639.561892086331)(2572,4928.025064748202)(2711,5218.137230215827)(2850,5509.955618705036)(2989,5803.4522230215825)(3128,6098.407683453238)(3267,6394.939179856115)(3406,6693.154683453237)(3545,6993.038381294964)(3684,7294.60390647482)(3823,7597.650071942446)(3962,7902.217467625899)(4101,8208.322928057554)(4240,8516.003287769783)(4379,8825.339122302157)(4518,9136.520949640288)(4657,9449.287647482015)(4796,9763.798683453238)(4935,10080.263582733814)(5074,10398.514791366906)(5213,10718.48792086331)(5352,11040.249784172662)(5491,11363.872978417266)(5630,11689.546057553956)(5769,12017.450007194246)(5908,12347.792316546762)(6047,12680.60456115108)(6186,13016.34720143885)(6325,13355.228007194246)(6464,13697.062316546762)(6603,14042.445014388488)(6742,14391.950726618705)(6881,14746.046215827339)(7020,15105.197007194245)(7159,15469.613553956835)(7298,15839.61420143885)(7437,16215.978489208634)(7576,16599.61313669065)(7715,16990.84354676259)(7854,17388.615035971223)(7993,17791.553726618702)(8132,18199.8160647482)(8271,18613.770791366904)(8410,19033.166517985614)(8549,19457.74643884892)(8688,19887.228582733813)(8827,20321.42998561151)(8966,20760.531201438847)(9105,21204.715884892088)(9244,21653.977482014387)(9383,22108.476805755396)(9522,22569.0895971223)(9661,23036.884791366905)(9800,23512.15438848921)(9939,23995.244532374098)(10078,24486.649402877698)(10217,24986.187071942444)(10356,25493.466791366904)(10495,26008.347381294967)(10634,26531.385474820145)(10773,27062.61843884892)(10912,27602.335136690646)(11051,28149.832920863308)(11190,28704.555971223024)(11329,29267.476309352518)(11468,29838.942035971224)(11607,30418.584762589926)(11746,31006.01256115108)(11885,31602.551496402877)(12024,32210.101381294964)(12163,32828.29960431655)(12302,33458.97757553957)(12441,34103.57407194245)(12580,34763.35728057554)(12719,35440.604517985616)(12858,36140.66408633094)(12997,36866.012892086335)(13136,37615.940669064745)(13275,38394.040992805756)(13414,39240.46368345324)(13553,40415.1047913669)(13632,41390.67331578948)(13643,41576.524)};
\addlegendentry{OSTRICH}

\addplot[name path=pathZ3-Trau , colourZ3-Trau, line width=1.5pt,dash pattern={on 7pt off 2pt on 1pt off 3pt}] coordinates {(1,0.021)(93,2.3040540540540544)(278,7.1193243243243245)(463,12.159351351351352)(648,17.345243243243242)(833,22.671)(1018,28.108248648648647)(1203,33.670054054054056)(1388,39.369)(1573,45.13881621621622)(1758,51.045)(1943,57.042648648648644)(2128,63.151881081081086)(2313,69.387)(2498,75.72735675675675)(2683,82.1996)(2868,88.8)(3053,95.49543243243242)(3238,102.32703243243243)(3423,109.268)(3608,116.3078918918919)(3793,123.481)(3978,130.72551351351353)(4163,138.108)(4348,145.58024864864865)(4533,153.16604324324325)(4718,160.869)(4903,168.64762702702703)(5088,176.558)(5273,184.54481621621622)(5458,192.669)(5643,200.84023243243243)(5828,209.149)(6013,217.52362162162163)(6198,226.027)(6383,234.60575675675673)(6568,243.30484324324323)(6753,252.133)(6938,261.04423243243247)(7123,270.093)(7308,279.23564864864863)(7493,288.4958972972973)(7678,297.891)(7863,307.3639513513513)(8048,316.97211351351353)(8233,326.691)(8418,336.5238432432432)(8603,346.49751351351347)(8788,356.608)(8973,366.8615675675676)(9158,377.2393945945946)(9343,387.756)(9528,398.3861513513514)(9713,409.15524324324326)(9898,420.0728162162162)(10083,431.13902702702705)(10268,442.35)(10453,453.72600540540543)(10638,465.25243243243244)(10823,476.94902702702706)(11008,488.8051351351352)(11193,500.8363945945946)(11378,513.0653135135135)(11563,525.510854054054)(11748,538.1667621621621)(11933,551.040027027027)(12118,564.1506486486486)(12303,577.5008810810812)(12488,591.1034648648648)(12673,605.0039891891892)(12858,619.2891351351351)(13043,633.9784702702702)(13228,649.1127891891892)(13413,664.8049297297298)(13598,681.1325027027026)(13783,698.1714432432433)(13968,715.9630324324324)(14153,734.7967945945946)(14338,754.8878378378379)(14523,776.0934702702702)(14708,798.4790432432433)(14893,822.0978432432432)(15078,846.7652162162162)(15263,872.6489513513513)(15448,900.0625621621622)(15633,928.9504054054055)(15818,959.5314972972974)(16003,992.251772972973)(16188,1027.3043027027027)(16373,1065.1179135135135)(16558,1106.5884432432433)(16743,1152.762864864865)(16928,1206.119945945946)(17113,1276.6789567567566)(17298,1366.6714054054055)(17483,1525.4639945945946)(17668,1868.4734270270271)(17853,2550.809054054054)(18038,3784.824291891892)(18132.5,5168.105)(18136,5217.583)};
\addlegendentry{Z3-Trau}

\addplot[name path=pathZ3str3 , colourZ3str3, line width=1.5pt,dash dot] coordinates {(1,0.017)(103,2.2264536585365855)(308,6.873063414634147)(513,11.740834146341463)(718,16.781)(923,21.937975609756098)(1128,27.248)(1333,32.64240487804878)(1538,38.173)(1743,43.778902439024385)(1948,49.516)(2153,55.31421951219512)(2358,61.253)(2563,67.27141951219512)(2768,73.4190731707317)(2973,79.677)(3178,86.03251219512195)(3383,92.504)(3588,99.06451219512195)(3793,105.741)(3998,112.50651219512194)(4203,119.388)(4408,126.35874634146342)(4613,133.448)(4818,140.623)(5023,147.89900487804877)(5228,155.279)(5433,162.7420243902439)(5638,170.326)(5843,178.0090487804878)(6048,185.799)(6253,193.66241951219513)(6458,201.6551756097561)(6663,209.761)(6868,217.9631219512195)(7073,226.293)(7278,234.69923414634147)(7483,243.228)(7688,251.8381024390244)(7893,260.557)(8098,269.3739804878049)(8303,278.318)(8508,287.34373658536583)(8713,296.509)(8918,305.7637804878049)(9123,315.172)(9328,324.65363414634146)(9533,334.28)(9738,344.01110243902434)(9943,353.85540487804883)(10148,363.836)(10353,373.9170048780488)(10558,384.15)(10763,394.4839268292683)(10968,404.95541463414634)(11173,415.58)(11378,426.3341756097561)(11583,437.22718536585364)(11788,448.27571219512197)(11993,459.473)(12198,470.81890243902444)(12403,482.351243902439)(12608,494.06559999999996)(12813,505.95897560975607)(13018,518.0411024390244)(13223,530.3336243902439)(13428,542.8545365853658)(13633,555.5852487804879)(13838,568.5281121951219)(14043,581.7019170731708)(14248,595.146243902439)(14453,608.8557804878049)(14658,622.8558)(14863,637.1969804878048)(15068,651.8989560975609)(15273,666.9906390243902)(15478,682.5244536585366)(15683,698.5892780487806)(15888,715.3176731707317)(16093,732.8633804878049)(16298,751.598312195122)(16503,771.8741902439024)(16708,794.251712195122)(16913,819.2873024390244)(17118,847.4867463414635)(17323,877.7979170731708)(17528,909.5736487804878)(17733,943.5206243902439)(17938,981.4391902439024)(18143,1024.7791268292683)(18348,1074.756092682927)(18553,1131.0165219512196)(18758,1194.3116439024388)(18963,1267.343)(19168,1350.8028097560975)(19373,1447.4682878048782)(19578,1563.6074536585365)(19783,1787.7796975609756)(19985.5,2316.803195)(20087,2816.004)};
\addlegendentry{Z3str3}

\addplot[name path=pathZ3str3RE , colourZ3str3RE, line width=1.5pt] coordinates {(1,0.007)(110.5,1.1454863636363637)(330.5,3.7729681818181815)(550.5,6.617545454545455)(770.5,9.592)(990.5,12.672)(1210.5,15.752)(1430.5,18.8498)(1650.5,22.1105)(1870.5,25.4105)(2090.5,28.7105)(2310.5,32.0105)(2530.5,35.3105)(2750.5,38.6105)(2970.5,42.008354545454544)(3190.5,45.528)(3410.5,49.048)(3630.5,52.568)(3850.5,56.088)(4070.5,59.608)(4290.5,63.128)(4510.5,66.648)(4730.5,70.168)(4950.5,73.78304545454546)(5170.5,77.5225)(5390.5,81.2625)(5610.5,85.0025)(5830.5,88.7425)(6050.5,92.4825)(6270.5,96.2225)(6490.5,99.9625)(6710.5,103.7025)(6930.5,107.4425)(7150.5,111.22736363636363)(7370.5,115.173)(7590.5,119.133)(7810.5,123.093)(8030.5,127.053)(8250.5,131.013)(8470.5,134.973)(8690.5,138.933)(8910.5,142.893)(9130.5,146.85471818181816)(9350.5,150.9505)(9570.5,155.1305)(9790.5,159.3105)(10010.5,163.4905)(10230.5,167.6705)(10450.5,171.8505)(10670.5,176.0305)(10890.5,180.30186363636363)(11110.5,184.701)(11330.5,189.101)(11550.5,193.501)(11770.5,197.9017772727273)(11990.5,202.4295)(12210.5,207.0495)(12430.5,211.6695)(12650.5,216.29506818181818)(12870.5,221.069)(13090.5,225.909)(13310.5,230.79515)(13530.5,235.8415)(13750.5,240.95575)(13970.5,246.226)(14190.5,251.59827727272727)(14410.5,257.11910454545455)(14630.5,262.8130454545455)(14850.5,268.7088636363637)(15070.5,274.86572727272727)(15290.5,281.37265454545457)(15510.5,288.4409727272727)(15730.5,296.2691)(15950.5,304.8845227272727)(16170.5,314.3141181818182)(16390.5,324.52733181818184)(16610.5,335.75216818181815)(16830.5,348.64034545454547)(17050.5,363.10413636363637)(17270.5,378.6765363636364)(17490.5,395.32815909090914)(17710.5,413.16744545454543)(17930.5,432.6688409090909)(18150.5,454.36280454545454)(18370.5,479.00889545454544)(18590.5,506.9938181818182)(18810.5,536.6671590909091)(19030.5,567.3577909090909)(19250.5,598.9474727272727)(19470.5,631.7009909090909)(19690.5,666.5226909090909)(19910.5,707.1278000000001)(20130.5,759.9534727272728)(20350.5,818.7017454545455)(20570.5,882.8866954545455)(20790.5,952.0425954545454)(21010.5,1031.3764954545454)(21230.5,1136.5260454545455)(21450.5,1277.1239772727272)(21581,1431.705243902439)(21603,1504.149)};
\addlegendentry{Z3str3RE}

\path[name path=axisCVC4] (axis cs:0,0) -- (axis cs:22212,0);
\addplot [thick,color=colourCVC4,fill=colourCVC4,fill opacity=0.1] fill between [of=pathCVC4 and axisCVC4];
\path[name path=axisZ3Seq] (axis cs:0,0) -- (axis cs:20160,0);
\addplot [thick,color=colourZ3Seq,fill=colourZ3Seq,fill opacity=0.1] fill between [of=pathZ3Seq and axisZ3Seq];
\path[name path=axisOSTRICH] (axis cs:0,0) -- (axis cs:13643,0);
\addplot [thick,color=colourOSTRICH,fill=colourOSTRICH,fill opacity=0.1] fill between [of=pathOSTRICH and axisOSTRICH];
\path[name path=axisZ3-Trau] (axis cs:0,0) -- (axis cs:18136,0);
\addplot [thick,color=colourZ3-Trau,fill=colourZ3-Trau,fill opacity=0.1] fill between [of=pathZ3-Trau and axisZ3-Trau];
\path[name path=axisZ3str3] (axis cs:0,0) -- (axis cs:20087,0);
\addplot [thick,color=colourZ3str3,fill=colourZ3str3,fill opacity=0.1] fill between [of=pathZ3str3 and axisZ3str3];
\path[name path=axisZ3str3RE] (axis cs:0,0) -- (axis cs:21603,0);
\addplot [thick,color=colourZ3str3RE,fill=colourZ3str3RE,fill opacity=0.1] fill between [of=pathZ3str3RE and axisZ3str3RE];
\end{axis}\end{tikzpicture}}

\caption{Cactus plot showing detailed performance for the \benchmarkcollected{} benchmark.}
\label{fig:cactus_collected}

\end{figure}

\begin{table}[ht!]

\resizebox{0.95\textwidth}{!}{
\begin{tabular}{|c |c |c |c |c |c |c |}
\hline
&CVC4&Z3Seq&OSTRICH&Z3-Trau&Z3str3&Z3str3RE\\ 
  \hline\hline 
sat &\textbf{12077}&10712&5134&10714&10768&11553\\ 
 \hline
unsat &\textbf{10135}&9448&8532&10115&9332&10050\\ 
 \hline
\hline 
 unknown &\textbf{0}&\textbf{0}&8652&546&758&285\\ 
 \hline
timeout &213&2265&\textbf{107}&1050&1567&537\\ 
 \hline
soundness error &\textbf{0}&\textbf{0}&23&2776&13&\textbf{0}\\ 
 \hline
program crashes &\textbf{0}&\textbf{0}&\textbf{0}&504&\textbf{0}&\textbf{0}\\ 
 \hline
\hline 
 Total correct &\textbf{22212}&20160&13643&18053&20087&21603\\ 
 \hline
Contribution score &\textbf{91.06}&3.51&--&--&--&14.54\\ 
\hline
Time (s) &14610.224&47293.484&71666.750&32220.939&35053.106&\textbf{13202.451}\\ 
 \hline
Time w/o timeouts (s) &10350.224&\textbf{1993.484}&69526.750&11220.939&3713.106&2462.451\\ 
 \hline
\end{tabular}
}
\vspace{0.1cm}
\caption{Detailed results for the \benchmarkcollected{} benchmark. CVC4 has the biggest lead with a score of 1.03.}
\label{tab:cactus_collected}

\end{table}

\subsection{Detailed Experimental Results}

Figure~\ref{fig:cactus_automatark} and Table~\ref{tab:cactus_automatark} show the 
detailed results for the \textbf{\benchmarkregexlib{}} benchmark.
In this benchmark, \toolname{} solves more instances than all other solvers,
has the fewest timeouts/unknowns,
and has the fastest overall running time.
Including timeouts, \toolname{} is 2.2x faster than CVC4, 4.7x faster than Z3seq,
40.4x faster than OSTRICH, 20.4x faster than Z3-Trau,
and 32.3x faster than Z3str3.

Figure~\ref{fig:cactus_generated} and Table~\ref{tab:cactus_generated} show the 
detailed results for the \textbf{\benchmarkgen{}} benchmark.
\toolname{} solves more instances than all other solvers, has over 90\% fewer
timeouts than other solvers, no unknowns, and has the fastest overall running time.
Including timeouts, \toolname{} is 6.1x faster than CVC4,
6.9x faster than Z3seq, 10x faster than OSTRICH, 7.3x faster than Z3-Trau,
and 4.3x faster than Z3str3.

Figure~\ref{fig:cactus_transformed} and Table~\ref{tab:cactus_transformed} show the 
detailed results for the \textbf{\benchmarkxform{}} benchmark.
\toolname{} solves more instances in total than all other solvers and has
the lowest total running time without timeouts.
Including timeouts, \toolname{} is 2.7x faster than CVC4,
1.9x faster than Z3seq, 21x faster than OSTRICH,
and 27x faster than Z3str3.
Although Z3-Trau is 1.5x faster than \toolname{} on this benchmark, including timeouts,
Z3-Trau also produces 1241 answers with soundness errors and crashes on 718 other cases.
\toolname{} produces no wrong answers or soundness errors on the benchmark.
Z3-Trau also solves 1923 fewer cases correctly in total than \toolname{}.

Figure~\ref{fig:cactus_collected} and Table~\ref{tab:cactus_collected} show the 
detailed results for the \textbf{\benchmarkcollected{}} benchmark.
\toolname{} outperforms Z3seq, Z3str3, OSTRICH, and Z3-Trau on this benchmark and is competitive
with CVC4 both in terms of total number of instances correctly solved
and total running time.
CVC4 solves 609 more instances than \toolname{} on this benchmark, but \toolname{}
is 1.1x faster overall (including timeouts). 
\toolname{} is 3.6x faster than Z3seq, 5.4x faster than OSTRICH,
2.4x faster than Z3-Trau, and 2.6x faster than Z3str3.

\begin{figure}[t!]


    \centering
\begin{resizebox}{.95\textwidth}{!}
    \cactusHeuristics
\end{resizebox}

\caption{Cactus plot comparing performance by disabling individual heuristics on all benchmarks.\label{fig:cactus_heu}} 
\end{figure}

\tableHeu{}

\subsection{Analysis of Individual Heuristics and Results} \label{sec:empiricalheuristics}

To demonstrate the effectiveness of individual heuristics described in Section~\ref{sec:heuristics}
and implemented in \toolname{}, we evaluated different configurations of the tool in which one or more heuristics were disabled.
Figure~\ref{fig:cactus_heu} and Table~\ref{tab:tableHeu} show the results.
The plot line ``Z3str3RE'' shows
the baseline performance of the tool with all heuristics enabled. The plot line ``All heuristics off'' shows the performance with all heuristics disabled.
Each of the other plot lines shows the performance with the named heuristic disabled and all others kept enabled.
From the plots and table, it is clear that \toolname{} performs best with all heuristics enabled. \toolname{} is 4.4x faster
using all our heuristics than using none. Every other configuration of the tool performs significantly worse relative to the one with all heuristics enabled. Also, the length-aware and prefix/suffix heuristics provide significant boost over lazy intersections and the baseline. These results demonstrate empirically that each heuristic we introduce provides significant benefit in both total number of solved instances and total solver runtime, and that all of the heuristics can be used simultaneously for maximum efficacy.

\section{Related Work}
\noindent{\bf Comparison with Z3str3:}
Z3str3~\cite{Z3str3} supports regex constraints via (incomplete) reduction to word equations. We have replaced this word-based technique with our automata-based approach introduced in this paper. As demonstrated by our evaluation, the length-aware automata-based approach used in \toolname{} is more efficient at solving these constraints,
and is sound and complete for the QF theory $T_{LRE}$.


\noindent{\bf Comparison with Z3's sequence solver:} Z3's sequence solver~\cite{z3}
supports a more general theory of ``sequences'' over arbitrary datatypes, which allows it to be used as a string solver.
Z3seq uses regular expression derivatives to reduce regex constraints without constructing automata.
The experiments show \toolname{} performs better than Z3seq overall.


\noindent{\bf Comparison with CVC4:} The CVC4 solver~\cite{CVC4-CAV14}
uses an algebraic approach to solving regex constraints. As shown in the experiments, \toolname{} performs better than CVC4, widely considered as one of the best SMT solvers for strings as well as many other theories.


\noindent \textbf{Comparison with Z3-Trau:} The Z3-Trau~\cite{z3-trau} solver builds on Trau~\cite{AbdullaFlattenAndConquer}, re-implemented in Z3, and enriched with new ideas e.g. a more efficient handling of string-number conversion. The evaluation of Z3-Trau exposed 5325 soundness errors and 2477 crashes on our benchmarks.


\noindent \textbf{Comparison with OSTRICH:} The OSTRICH solver~\cite{ostrich} implements a reduction from straight-line and acyclic fragments of an input formula to the emptiness problem of alternating finite automata. OSTRICH produced 10901 ``unknown'' responses and 4575 timeouts on our benchmarks, as well as 28 soundness errors.\looseness=-1


\noindent{\bf Related Algorithms and Theoretical Results:} The theory of word equations and various extensions have been studied extensively for many decades.  In 1977, Makanin proved that satisfiability for the QF theory of word equations is decidable~\cite{makanin}; in 1999, Plandowski showed that this is in PSPACE~\cite{plandowski99,plandowski2006}. Schulz~\cite{schulz} extended Makanin's algorithm to word equations with regex constraints. The satisfiability problem for the theory of word equations with length constraints still remains open~\cite{makanin,plandowski2006,ganesh2012,Matiyasevich}, although the status of many other extensions of this theory was clarified \cite{rp2018strings}. Automata-based approaches were used to reason about string constraints enhanced with a ReplaceAll function \cite{ChenCHLW18} or transducers \cite{sloth}.  

Liang et al.~\cite{CVC4-FROCOS15} present a formal calculus for a theory that extends $T_{LRE}$ with string concatenation (but not word equations). However, in that paper the authors do not present experimental results regarding implementation of the string calculus proposed. We have implemented an algorithm based on fundamentals of the theory and standard automata-based constructions, and presented a thorough experimental evaluation of our implementation, concluding with the presentation of a string solver using it.

Abdulla et al.~\cite{norn} present an automata-based solver called Norn built upon results involving construction of length constraints from regex constraints. This approach differs significantly from our method. In particular, Norn only uses automata in inferring length constraints implied by regular expressions, then uses an algebraic approach to solve the remainder of the formula. By contrast, our tool uses a hybrid approach that includes both algebraic solving and automata-based reasoning in a symbiotic loop. In addition, we present several novel heuristics using length information to guide the search and, in some cases, avoid constructing automata or computing intersections.

The prefix/suffix over-approximation heuristic is inspired partly by the work of Brzozowski on regex derivatives~\cite{brzozowski}. The heuristic we introduce is conceptually different as we examine possible prefixes (and suffixes) of strings that could be accepted by a regex in order to demonstrate unsatisfiability, rather than examining the set of all possible suffixes given a fixed prefix in order to demonstrate satisfiability. Our heuristic computes suffixes as well, whereas Brzozowski derivatives are traditionally computed with respect to prefixes of a string.
Newer versions of Z3seq, including the one we evaluated, use an algorithm based on symbolic derivatives to reason about regular expressions~\cite{stanford2020symbolic}.

\section{Conclusions and Future Work}
\label{sec:conclusion}

In this paper, we have shown the power of length-aware and prefix/suffix reasoning for regex constraints with our algorithm and its implementation in \toolname{} via an extensive empirical comparison against five other state-of-the-art solvers (namely, CVC4, Z3seq, Z3str3, Z3-Trau, and OSTRICH) over a large and diverse benchmark of \totalinstances{} instances. Our length-aware method is very general and has wide applicability in the broad context of string solving. In the future, we plan to explore further length-aware heuristics which include more expressive functions and predicates, including \texttt{indexof}, \texttt{substr}, and string-number conversion. 

\bibliographystyle{splncs04}
\bibliography{main,cav15}

\newpage

\section*{Appendix}

\begin{restatable}{theorem}{decfour} \label{thm:undecWE}
The satisfiability problem for $T_{LRE,n,c}$ is undecidable.
\end{restatable}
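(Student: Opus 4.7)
The plan is to reduce Hilbert's Tenth Problem (Diophantine satisfiability over $\mathbb{N}$), known to be undecidable by Matiyasevich's theorem, to the satisfiability problem for $T_{LRE,n,c}$. This follows the spirit of the techniques from \cite{rp2018strings}, exploiting the tight interaction between string concatenation, linear integer arithmetic over string length, and the $numstr$ predicate.

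First I would restrict attention, via regex membership constraints of the form $s \in \{0,1\}^*$, to strings over the binary alphabet, which is the natural domain of $numstr$. On this restricted domain I would establish a small library of encoding primitives. In particular, I would show that string equality between two variables $X, Y$ can be simulated without an explicit word equation: asserting $len(X) = len(Y)$ together with $numstr(m, X)$ and $numstr(m, Y)$ for a fresh integer variable $m$ forces $X$ and $Y$ to coincide character-by-character (the leading-zero ambiguity being pinned down by the length equality). This already yields the equivalence, stated earlier in the paper, between $T_{RE,n,c}$ and the theory of word equations with regex and length-comparison constraints plus $numstr$.

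The heart of the reduction is encoding multiplication. The key observation is that if $numstr(n_1, X_1)$ and $numstr(n_2, X_2)$ hold with $len(X_2) = k$, then concatenation gives $numstr(n_1 \cdot 2^k + n_2,\ X_1 \cdot X_2)$, so a single concatenation implements a shift-and-add operation in which the shift amount is a \emph{variable} length. I would use this primitive to force a relation of the form $z = x \cdot y$ for arbitrary integer variables $x, y, z$: by introducing auxiliary string variables whose lengths are constrained by linear arithmetic in terms of $x$ and $y$, and whose numeric values (via $numstr$) are constrained to match $z$ and intermediate partial products, one can pin down multiplication as a combination of shifts whose total amount is itself an arithmetic expression over $x$ and $y$. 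With multiplication in hand, any Diophantine polynomial equation over $\mathbb{N}$ translates into a conjunction of atoms of $T_{LRE,n,c}$, so Matiyasevich's theorem transfers undecidability.

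The main obstacle is the multiplication encoding. One must carefully design the shapes of the auxiliary strings using regex constraints, and align their lengths and $numstr$-values so that the intended multiplicative relation is the \emph{only} solution, with no spurious models introduced by the freedom in $numstr$ (e.g., leading zeros) or by unintended carries when combining shift-and-add steps. Once this encoding is verified to be faithful, correctness and undecidability of $T_{LRE,n,c}$ follow immediately from the undecidability of Hilbert's Tenth Problem.
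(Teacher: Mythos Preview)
Your proposal and the paper take genuinely different routes to the same theorem.

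The paper does \emph{not} reduce Hilbert's Tenth Problem directly. Instead it shows, via encodings of an equal-length predicate and then of string equality, that $T_{LRE,n,c}$ can simulate the full theory of word equations with regular constraints, the $numstr$ predicate, the length function, and linear arithmetic. It then simply cites \cite{rp2018strings} for the undecidability of that target theory. So the paper's argument is a short interpretability reduction that outsources the hard combinatorics to the cited result. Your approach, by contrast, aims to be self-contained: encode string equality (your version, using $len$ directly, is actually simpler than the paper's, which first manufactures an $eqLen$ predicate inside $T_{RE,n,c}$ without $len$), then encode multiplication via the shift--and--add identity $numstr(n_1\cdot 2^{k}+n_2,\ X_1\cdot X_2)$, and finish with Matiyasevich. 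If carried out, your route gives a from-scratch proof rather than a pointer to \cite{rp2018strings}; the paper's route is much shorter but less informative.

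The one place your plan is genuinely incomplete is exactly the step you flag as the ``main obstacle'': getting from the primitive $n\cdot 2^{k}$ (with $k$ a variable length) to an encoding of full multiplication $z=x\cdot y$. Your text asserts this can be done ``by introducing auxiliary string variables whose lengths are constrained \ldots'' but gives no concrete construction, and the step is not routine: having $n\cdot 2^{k}$ for variable $k$ is essentially having $(\mathbb{N},+,\,\lambda n\,k.\,n\cdot 2^{k})$, and defining $x\cdot y$ there still requires a nontrivial trick (e.g.\ forcing a string to be a $y$-fold repetition of a length-$x$ block via a commutation word equation $B\cdot W = W\cdot B$ together with a way to count blocks, and then reading off $x\cdot y$ as a length). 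None of this is impossible---indeed it is morally what \cite{rp2018strings} does---but until you spell out a faithful multiplication gadget, the reduction is a plan rather than a proof. Everything else in your proposal (the restriction to $\{0,1\}^*$, the string-equality encoding, and the final appeal to Matiyasevich) is sound.
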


\begin{proof}

We begin by considering the theory $T_{RE,n,c}$.

We will define a predicate $eqLen(\alpha,\beta)$, where $\alpha$ and $\beta$ are string terms, whose semantics is defined as follows: $eqLen(\alpha,\beta)$ is true if and only if the length of $\alpha$ equals the length of $\beta$ (i.e., $len(\alpha)=len(beta)$). 

We can express $eqLen(\alpha,\beta)$ as:\\
$eqLen(\alpha,\beta) = (z\in 1\{0\}^*) \land numstr(i,z) \land numstr (j,z0) \land numstr(n_a,1\alpha) \land numstr(n_b,1\beta) \land 
(i\leq n_a) \land (n_a +1 \leq j) \land (i\leq n_b) \land (n_b +1 \leq j)$. 

Indeed, we have $i=2^{len(z)}$ and $j=2^{len(z)+1}$. Then, we have $n_a=2^{len(\alpha)}+A$ and $n_b=2^{len(\beta)}+B$, where $numstr(A,\alpha)$ and $numstr(B,\beta)$ are true. Therefore, $2^{len(z)}\leq 2^{len(\alpha)}+A<2^{len(z+1)}$ and $2^{len(z)}\leq 2^{len(\beta)}+B<2^{len(z)+1}$. It is immediate that $len(\alpha)=len(\beta)=len(z)$, so our claim holds.

We can also show that the theory of word equations with regular constraints and $numstr$ predicate is equivalent to the theory $T_{RE,n,c}$. 

For one direction, we need to be able to express an equality predicate between string terms $eq(\alpha,\beta)$, where $\alpha$ and $\beta$ are two string terms. The regular constraints as well as those involving the $numstr$ predicate are canonically encoded. 

This predicate is encoded as follows:\\
$eq(\alpha,\beta) = eqLen(\alpha,\beta) \land numstr(i,1\alpha1\beta) \land numstr(j,1\beta1\alpha) \land (i=j)$. 

Indeed, this tests that $len(\alpha)=len(\beta)$ and $1\alpha1\beta=1\beta1\alpha$. If these are true, it is immediate that $\alpha=\beta$. 

For the converse, it is easy to see that each string constraint $S\in R$ (respectively, $S\notin R$), where $S$ is a string term, can be expressed as the word equation $S=Y_R$, where $Y_R$ is a fresh variable, which is constrained by the regular language defined by $R$ (respectively, by the regular language defined by $\overline{R}$). 

This allows us to define a stronger length-comparison predicate.
We will define a predicate $leqLen(\alpha,\beta)$, where $\alpha$ and $\beta$ are string terms, whose semantics is defined as follows: $leqLen(\alpha,\beta)$ is true if and only if the length of $\alpha$ is smaller than or equal to the length of $\beta$ (i.e., $len(\alpha)\leq len(beta)$). 

We can express $leqLen(\alpha,\beta)$ as:
$leqLen(\alpha,\beta) = (z\in \{0,1\}^*) \land eqLen(\alpha z, \beta)$. 

Finally, we can now move on to $T_{LRE,c,n}$ and show our statement. According to \cite{rp2018strings} the quantifier-free theory of word equations expanded with $numstr$ predicate and length function (not only a length-comparison predicate) and linear arithmetic is undecidable. Thus, if we consider $T_{LRE,n,c}$, this undecidability result immediately holds according to the above.

\qed \end{proof}

\end{document}